\date{\today} 
\title{Survivable Network Design for Group Connectivity in Low-Treewidth Graphs}
\author{
Parinya Chalermsook\thanks{
   Aalto University, Finland. 
   {\bf email:} \texttt{chalermsook@gmail.com}}
\and
Syamantak Das\thanks{
   Indraprastha Institute of Information Technology Delhi, India.
   {\bf email:} \texttt{syamantak@iiitd.ac.in}}
\and
Guy Even\thanks{
   Tel-Aviv University, Israel. 
   {\bf email}: \texttt{guy@eng.tau.ac.il}}
\and
Bundit Laekhanukit\thanks{
   Max-Planck-Institut f\"ur Informatik, Germany \& 
   Shanghai University of Finance and Economics, China.
   {\bf email}: \texttt{blaekhan@mpi-inf.mpg.de}}
\and
Daniel Vaz\thanks{
   Max-Planck-Institut f\"ur Informatik, Germany \& 
   Graduate School of Computer Science, Saarland University, Germany.
   {\bf email}: \texttt{ramosvaz@mpi-inf.mpg.de}}
}
\declaretheorem[numberwithin=section]{theorem}
\declaretheorem[numberlike=theorem]{lemma}
\declaretheorem[numberlike=theorem]{observation}
\newcommand{\shortparagraph}[1]{\paragraph{#1}}
\crefname{theorem}{Theorem}{Theorems}
\crefname{corollary}{Corollary}{Corollaries}
\crefname{lemma}{Lemma}{Lemmas}
\crefname{proposition}{Proposition}{Propositions}
\crefname{claim}{Claim}{Claims}
\crefname{definition}{Definition}{Definitions}
\crefname{observation}{Observation}{Observations}
\newcommand{\NP}{\mbox{\sf NP}}
\newcommand{\polylog}[1]{\mathrm{polylog(#1)}}
\newcommand{\DTIME}{\mbox{\sf DTIME}}
\newcommand{\ts}{t^{\star}}
\newcommand{\vset}{{\mathcal V}}
\newcommand{\cons}[1]{\stackrel{#1}{\longleftrightarrow}}
\renewcommand{\phi}{\varphi}
\newcommand{\poly}{\operatorname{poly}}
\newcommand{\RootedSNDP}{\textsf{Rooted SNDP}\xspace}
\newcommand{\RootedECSNDP}{\textsf{Rooted EC-SNDP}\xspace}
\newcommand{\RootedVCSNDP}{\textsf{Rooted VC-SNDP}\xspace}
\newcommand{\RestrictedGroupSNDP}{\textsf{Restricted Group SNDP}\xspace}
\newcommand{\RootedGroupSNDP}{\textsf{Rooted Group SNDP}\xspace}
\newcommand{\SubsetkEC}{\textsf{Subset $k$-EC}\xspace}
\newcommand{\SubsetkVC}{\textsf{Subset $k$-VC}\xspace}
\newcommand{\gst}{\mbox{\sf GST}\xspace}
\newcommand{\sndp}{\mbox{\sf SNDP}\xspace}
\newcommand{\ecsndp}{\mbox{\sf EC-SNDP}\xspace}
\newcommand{\vcsndp}{\mbox{\sf VC-SNDP}\xspace}
\newcommand{\pset}{{\mathcal P}}
\newcommand{\cset}{{\mathcal C}} 
\newcommand{\sset}{{\mathcal S}} 
\newcommand{\tset}{{\mathcal T}} 
\newcommand{\xset}{{\mathcal X}} 
\newcommand{\tc}[0]{\ensuremath{\operatorname{tc}}}
\newcommand{\restr}[1]{\ensuremath{\bigl.#1\bigr|}}
\newcommand{\ttil}[0]{\ensuremath{\tilde t}}
\newcommand{\rootn}[0]{\ensuremath{\operatorname{root}}}
\begin{document}
\maketitle

\begin{abstract}
{\small
In the {\em Group Steiner Tree} problem (GST), 
we are given a (vertex or edge)-weighted graph $G=(V,E)$ on $n$ vertices, 
together with a root vertex $r$ and a collection of 
groups $\{S_i\}_{i \in [h]}: S_i \subseteq V(G)$. 
The goal is to find a minimum-cost subgraph $H$ that 
connects the root to every group.
We consider a fault-tolerant variant of GST, 
which we call \textsf{Restricted (Rooted) Group SNDP}.
In this setting, 
each group $S_i$ has a demand $k_i \in [k], k\in {\mathbb N}$,
and we wish to find a minimum-cost subgraph $H\subseteq G$ such that, 
for each group $S_i$, there is a vertex in the group that is 
connected to the root via $k_i$ (vertex or edge) disjoint paths. 
   
While GST admits $O(\log^2 n \log h)$ approximation,
its higher connectivity variants are known to be {\em Label-Cover} hard, 
and for the vertex-weighted version, the hardness holds even when $k=2$
(it is widely believed that there is no subpolynomial approximation
for the Label-Cover problem [Bellare et al., STOC 1993]).
More precisely, the problem admits no $2^{\log^{1-\epsilon}n}$-approximation
unless $\NP\subseteq\DTIME(n^{\polylog{n}})$.
Previously, positive results were known only for the edge-weighted version
when $k=2$ 
[Gupta et al., SODA 2010; Khandekar et al., Theor. Comput. Sci., 2012]
and for a relaxed variant where $k_i$ disjoint paths from $r$ may end
at different vertices in a group [Chalermsook et al., SODA 2015], 
for which the authors gave a bicriteria approximation.
For $k \geq 3$, there is no non-trivial approximation algorithm known
for edge-weighted \RestrictedGroupSNDP, except for 
the special case of the relaxed variant on trees (folklore).

Our main result is an $O(\log n \log h)$ approximation for 
\RestrictedGroupSNDP that runs in time $n^{f(k, w)}$,
where $w$ is the treewidth of the input graph.
This nearly matches the lower bound when $k$ and $w$ are constants. 
The key to achieving this result is a non-trivial extension of 
a framework introduced in [Chalermsook et al., SODA 2017]. 
This framework first embeds all feasible solutions to the problem into a 
dynamic program (DP) table.
However, finding the optimal solution in the DP table remains intractable.
We formulate a linear program relaxation for the DP and obtain 
an approximate solution via randomized rounding. 
This framework also allows us to systematically construct 
DP tables for high-connectivity problems. 
As a result, we present new exact algorithms for several variants of survivable network design problems in low-treewidth graphs.  
}
\end{abstract} 

\section{Introduction}

Network design is an important subject in computer science and 
combinatorial optimization.
The goal in network design is to build a network that meets 
some prescribed properties while minimizing the construction cost.
{\em Survivable network design problems} (\sndp) are a class of problems 
where we wish to design a network that is resilient against link or node failures. 

These problems have been phrased as optimization problems on graphs, where we are given
an $n$-vertex (undirected or directed) graph $G=(V,E)$ with 
costs on edges or vertices together with a connectivity requirement 
$k: V\times V\rightarrow\mathbb{N}$.
The goal is to find a minimum-cost subgraph $H\subseteq G$, such that 
every pair $u,v\in V$ of vertices are connected by $k(u,v)$ edge-disjoint 
(resp., openly vertex-disjoint) paths.
In other words, we wish to design a network in which every pair of vertices 
remains connected (unless $k(u,v)=0$), even after removing $k(u,v)-1$ edges (or vertices). 
The {\em edge-connectivity} version of \sndp (\ecsndp) models the existence of link failures
and the {\em vertex-connectivity} (\vcsndp) models the existence of both link and node failures.
These two problems were known to be NP-hard and 
have received a lot of attention in the past decades
(see, e.g., \cite{Jain01,ChuzhoyK12,Nutov12,WilliamsonGMV95}).

While \vcsndp and \ecsndp address the questions that arise from designing telecommunication networks,
another direction of research focuses on the questions that arise from 
media broadcasting as in cable television or streaming services.
In this case, we may wish to connect the global server to a single local server in each community,
who will forward the stream to all the clients in the area through their own local network.
The goal here is slightly different from the usual \sndp, as it is not required
to construct a network that spans every client; instead, we simply need to choose a
local server (or representative), which will take care of connecting to other clients in the same group.
This scenario motivates the {\em Group Steiner Tree} problem (\gst) and its fault-tolerant variant,
 the {\em Rooted Group SNDP}.

In \RootedGroupSNDP, we are given a graph $G=(V,E)$
with costs on edges or vertices, a root vertex $r$, and 
a collection of subsets of vertices called {\em groups},
$S_1,\ldots,S_h$,  together with connectivity demands 
$k_1,\ldots,k_h\in [k]$, $k\in\mathbb{N}$.
The goal in this problem is to find a minimum cost subgraph $H\subseteq G$
such that $H$ has $k_i$ edge-disjoint (or openly vertex-disjoint) paths connecting the root
vertex $r$ to some vertex $v_i\in S_i$, for all $i\in[h]$.
In other words, we wish to choose one representative from each group
and find a subgraph of $G$ such that each representative is $k$-edge-(or vertex)-connected
to the root.

When $k=1$, the problem becomes the well-known 
{\em Group Steiner Tree} (\gst) problem. Here, we are given a
graph $G=(V,E)$ with edge or vertex costs, a root $r$
and a collection of subsets of vertices called groups, 
 $S_1,\ldots,S_h\subseteq V$,
and the goal is to find a minimum-cost subgraph $H\subseteq G$ that
has a path to some vertex in each $S_i$, for $i\in[h]$.  
The \gst problem is known to admit an $O(\log^3n)$-approximation algorithm
\cite{GargKR00} and cannot be approximated to a factor of $\log^{2-\epsilon}n$ unless 
$\NP\subseteq\mathrm{ZPTIME}(n^{\polylog n})$ \cite{HalperinK03}. 

The \RootedGroupSNDP generalizes \gst to handle fault tolerance.
 The case where $k=2$ is studied in \cite{KhandekarKN12,GuptaKR10},
culminating in the $\tilde{O}(\log^4n)$-approximation algorithm for the problem.
For $k\geq 3$, there is no known non-trivial approximation algorithm.
It is known among the experts that this
problem is at least as hard as the {\em Label-Cover} problem%
\footnote{The hardness for the case of directed graph was shown in \cite{KhandekarKN12}, 
but it is not hard to show the same result for undirected graphs.}.

Chalermsook, Grandoni and Laekhanukit \cite{ChalermsookGL15} studied a 
relaxed version of the problem in which we are not restricted to connect to a 
single vertex in each group and thus need only $k_i$ edge-disjoint paths
connecting the root vertex to the whole group $S_i$. 
Despite being a relaxed condition, the problem remains as hard as 
the Label-Cover problem, and they only managed to design a bicriteria approximation
algorithm. 

To date, there is no known bicriteria or even sub-exponential-time poly-logarithmic approximation for \RootedGroupSNDP when $k\geq 3$.
The following is an intriguing open question: 
\begin{quote} 
What are the settings (i.e., ranges of $k$ or graph classes) in which \RootedGroupSNDP admits a poly-logarithmic approximation?  
\end{quote} 

In this paper, we focus on developing algorithmic techniques to approach the above question.
We design poly-logarithmic algorithms for a special class of graphs -- graphs with bounded treewidth --
in the hope that it will shed some light towards solving the problem on a more general class of graphs,
for instance, planar graphs (this is the case for the Steiner tree problem,
where a sub-exponential-time algorithm for planar graphs is derived via decomposition into low-tree width instances \cite{PilipczukPSL13}).

Our main technical building block is a dynamic program (DP) that solves rooted versions of \ecsndp and \vcsndp
in bounded-treewidth graphs. However, a straightforward DP computation is not applicable for {\sf Restricted Rooted Group SNDP},
simply because the problem is NP-hard on trees (so it is unlikely to admit a polynomial-size DP-table). 
Hence, we ``embed'' the DP table into a tree  
and devise a polylogarithmic approximation algorithm using randomized rounding of a suitable LP-formulation.
We remark that when the cost is polynomially bounded (e.g., in the Word RAM
model with words of size $O(\log n)$), polynomial-time algorithms
for \ecsndp and \vcsndp follows from {\em Courcelle's Theorem} \cite{Courcelle90,BoriePT92}
 (albeit, with much larger running time).
However, employing the theorem as a black-box 
does not allow us to design approximation algorithms for {\sf Restricted Rooted Group SNDP}.

To avoid confusion between the relaxed and restricted version of \RootedGroupSNDP (usually having the same name in literature), 
we refer to our problem as {\sf Restricted Group SNDP}. (For convenience, we also omit the word "rooted".)

\subsection{Related Work}
\sndp problems on restricted graph classes have also been studied extensively.  
When $k=1$, the problems are relatively well understood.
Efficient algorithms and PTAS have been developed for many graph classes: low-treewidth graphs~\cite{BateniHM11,CyganNPPRW11}, metric-cost graphs~\cite{CheriyanV07}, Euclidean graphs~\cite{BorradaileKM15}, planar graphs~\cite{BorradaileKM09}, and graphs of bounded genus~\cite{BorradaileDT14}.
However, when $k \geq 2$, the complexity of these problems remains wide open. 
Borradaile et al.~\cite{BorradaileDT14,BorradaileZ17} showed an algorithm for $k =1, 2, 3$ on planar graphs, but under the assumption that one can buy multiple copies of edges (which they called {\em relaxed connectivity} setting). 
Without allowing multiplicity, very little is known when $k\geq 2$: Czumaj et al.~\cite{CzumajGSZ04} showed a PTAS for $k=2$ in unweighted planar graphs, and Berger et al.~\cite{BergerG07} showed an exact algorithm running in time $2^{O(w^2)} n$ for the uniform demand case (i.e., $k(u,v) =2$ for all pairs $(u,v)$).  
Thus, without the relaxed assumption, with non-uniform demands or $k > 2$, the complexity of \sndp problems on bounded-treewidth graphs and planar graphs is not adequately understood.  

The technique of formulating an LP from a DP table has been used in literature.
It is known that any (discrete) DP can be formulated as an LP, which is integral \cite{MartinRC90}.
(For Stochastic DP, please see, e.g., \cite{Manne60,d1963,Buyuktahtakin2011,FariasR01}.)
However, the technique of producing a tree structure out of a DP table is quite rare.
Prior to this paper the technique of rounding LP via a tree structure was used
in \cite{GuptaTW13} to approximate the Sparsest-Cut problem.
The latter algorithm is very similar to us.
However, while we embed a graph into a tree via a DP table,
their algorithm works directly on the tree decomposition.
We remark that our technique is based on the previous work
in \cite{ChalermsookDLV17} with almost the same set of authors.

\subsection{Hardness of Approximating Restricted Group SNDP}
As mentioned, it is known among the experts that vertex-cost variant of
the \RestrictedGroupSNDP has a simple reduction for the {\em Label-Cover} problem 
and more generally,
the {\em $k$-Constraint Satisfaction} problem ($k$-CSP).
The original construction was given by Khandekar, Kortsarz and Nutov \cite{KhandekarKN12}
for the \RestrictedGroupSNDP on directed graphs.
However, the same construction applies for the
{\sf Vertex-Weighted} \RestrictedGroupSNDP. 
We are aware that this fact might not be clear for the readers.
Thus, we provide the sketch of the proof in \Cref{app:hardness}.

We remark that the $k$-CSP hardness implies that even for $k_i\in\{0,2\}$, 
{\sf Vertex-Weighted} \RestrictedGroupSNDP cannot be approximated to within a factor of
$2^{\log^{1-\epsilon}n}$, for any $\epsilon>0$,
unless $\NP\subseteq\DTIME(n^{\polylog{n}})$,
and the approximation hardness is conjectured to be polynomial on $n$, say
$n^{\delta}$ for some $0<\delta<1$, 
under the {\em Sliding Scale Conjecture} \cite{BellareGLR93}.
So far, we do not know of any non-trivial approximation for
this problem for $k\geq 2$.

The edge-cost variant has been studied in \cite{KhandekarKN12,GuptaKR10},
and a polylogarithmic approximation is known for the case $k=2$ \cite{KhandekarKN12}.
For $k>3$, there is no known non-trivial approximation algorithm.
The relaxed variant where the $k$ disjoint paths from the root may
end at different vertices in each group $S_i$ has also been studied in \cite{KhandekarKN12,GuptaKR10}.
Chalermsook, Grandoni and Laekhanukit proposed a bicriteria approximation
algorithm for the {\sf Relaxed} \RestrictedGroupSNDP \cite{ChalermsookGL15}; however, their technique 
is not applicable for the restricted version.
Note that the hardness of the edge-cost variant of {\sf Relaxed} \RestrictedGroupSNDP
is $k^{1/6-\epsilon}$, for any $\epsilon>0$ \cite{ChalermsookGL15}.
It is not hard to construct
the same hardness result for \RestrictedGroupSNDP.
We believe that \RestrictedGroupSNDP is strictly harder
than the relaxed variant.

\subsection{Our Results \& Techniques}

Our main result is the following approximation result for \RestrictedGroupSNDP.

\begin{theorem} 
  \label{thm:main:rgsndp}
  There is an $O(\log n \log h)$ approximation for \RestrictedGroupSNDP  that runs in time $n^{f(w,k)}$ for some function $f$.  
\end{theorem}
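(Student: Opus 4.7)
The plan is to extend the DP-plus-rounding framework of Chalermsook et al.~\cite{ChalermsookDLV17} from the $k=1$ setting to higher connectivity. The first step is to design a dynamic program over a tree decomposition of $G$ that exactly solves \emph{rooted} \ecsndp/\vcsndp with demand bound $k$. For each bag $B_t$, the DP state records a ``connectivity signature'' of the boundary vertices: which boundary vertices are incident to the partial solution, how they are partitioned into connectivity classes toward the root, and how many root-disjoint paths of each class are still needed to be completed outside the subtree rooted at $t$. Since $|B_t|\leq w+1$ and demands lie in $[k]$, the number of signatures per bag is bounded by some $g(w,k)$, yielding a DP table of size $n^{f(w,k)}$.

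The second step is to ``embed'' this DP into an auxiliary tree (or layered DAG) $\tset$: each node of $\tset$ is a (bag, signature) pair and each edge encodes a valid DP transition, inheriting the edge/vertex cost of the selection made during that transition. Roots of $\tset$ correspond to the boundary at the root bag and leaves correspond to terminal signatures that ``witness'' particular vertices of $G$ being $k_i$-connected to $r$. Crucially, every feasible solution of \RestrictedGroupSNDP projects to a subtree of $\tset$ of the same cost, up to a factor depending only on $w,k$. The size of $\tset$ is $n^{f(w,k)}$, so $\log |\tset| = O(\log n)$ for constant $w, k$.

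The third step is to write an LP relaxation on $\tset$ in the spirit of the Garg--Konjevod--Ravi relaxation for Group Steiner Tree on trees: for each edge $e$ of $\tset$ we have a variable $x_e \in [0,1]$ with cost equal to the cost of the underlying transition, flow conservation at internal nodes, and a covering constraint for each group $S_i$ requiring that at least one unit of flow from the root of $\tset$ reaches a leaf that witnesses $k_i$ root-connectivity to some vertex of $S_i$. An optimal fractional DP solution is at most $\opt$, so the LP is a relaxation. Then apply GKR randomized rounding on $\tset$: since $\tset$ has depth $O(\log n)$ (after a balancing transformation, or since treewidth decompositions admit $O(\log n)$-depth versions) and $h$ groups, the rounding yields a feasible solution of cost $O(\log n \log h) \cdot \opt$. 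Finally, project the selected edges/vertices of $\tset$ back to $G$ to produce the output subgraph.

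The main obstacle, as in \cite{ChalermsookDLV17}, will be the DP design and the faithful embedding in step two: verifying that the signatures suffice to encode $k$-edge or $k$-vertex connectivity patterns across bag boundaries, and that the ``projection'' from integral $\tset$-solutions back to $G$-subgraphs is cost-preserving without double counting edges that are bought by several DP branches. Handling vertex-disjointness is delicate because a single boundary vertex may be shared among several connectivity classes, so the signature must also track multiplicities and routings at the boundary; bounding the number of such signatures by a function of $w$ and $k$ is exactly what forces the $n^{f(w,k)}$ running time and is the step on which the approximation ratio relies.
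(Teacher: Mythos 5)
Your overall pipeline is the same as the paper's: a DP over the tree decomposition for rooted \ecsndp/\vcsndp, an embedding of the DP table into a tree of size $n^{f(w,k)}$ and depth $O(\log n)$, a GKR-style LP on that tree, and randomized rounding giving $O(\log n \log h)$. Two of your stated worries are non-issues: double counting is avoided by charging each edge to the unique topmost bag containing both of its endpoints, and the embedding must be (and in the paper is) exactly cost-preserving, not merely up to an $f(w,k)$ factor --- otherwise the approximation ratio would not be $O(\log n \log h)$ independently of $w$ and $k$.

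However, the step you defer as ``the main obstacle'' is precisely where your signature, as described, fails. You propose a \emph{single} signature per DP cell recording how the boundary vertices ``are partitioned into connectivity classes toward the root.'' A single partition cannot certify the demands of several groups simultaneously: the $k_i$ edge-disjoint $r$--$\gamma_i$ paths for group $S_i$ and the $k_j$ edge-disjoint $r$--$\gamma_j$ paths for group $S_j$ generally induce \emph{different} partitions of the same solution edge set (see \Cref{fig:multiple-partitions}), so a DP that fixes one partition per bag either over-constrains the solution or cannot verify feasibility. The paper's resolution is to make the profile of a bag a \emph{collection} $\Psi$ of pairs $(\vec{\Gamma},\vec{\Delta})$ of $k$-tuples of connection sets --- one pair per realized partition pattern, with $\vec{\Gamma}$ encoding connectivity achieved inside $G_t$ and $\vec{\Delta}$ encoding connectivity of the global solution --- together with the equivalence lemma (\Cref{lem:edge:unify}) showing that the local recursions for $(\Gamma_t,\Delta_t)$ coincide with their global definitions. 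Since each $\Gamma_j,\Delta_j$ is essentially an equivalence relation on $X_t$, the number of such collections is still $\exp(w^{O(wk)})$, which keeps the table size at $n\cdot f(w,k)$ and the tree size at $n^{w^{O(wk)}}$. Without this ``set of partition patterns'' device (or an equivalent one) your DP is not correct and the rest of the argument does not go through; with it, your outline matches the paper's proof.
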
 

The proof of this theorem relies on the technique introduced in~\cite{ChalermsookDLV17}.
We give an overview of this technique and highlight how this paper departs from it. 

In short, this technique ``bridges'' the ideas of dynamic program (DP) and randomized LP rounding in two steps\footnote{One may view our result as a ``tree-embedding'' type result. Please see \cite{ChalermsookDLV17} for more discussion along this line. Here we choose to present our result in the viewpoint of DP \& LP.}. 
Let us say that we would like to approximate optimization problem $\Pi$. 
In the first step, a ``nice'' DP table that captures the computation of the optimal solution is created, and there is a 1-to-1 correspondence between the DP solution and the solution to the problem.
However, since the problem is NP-hard (in our case, even hard to approximate to within some poly-logarithmic factor), we could not follow the standard bottom-up computation of DP solutions. 
The idea of the second step is to instead write an LP relaxation that captures the computation of the optimal DP solution, and then use a randomized dependent rounding to get an approximate solution instead; the randomized rounding scheme is simply the well-known GKR rounding~\cite{GargKR00}.  
Roughly speaking, the size of the DP table is $n \cdot w^{O(w)}$, while the LP relaxation has $n^{O(w \log w)}$ variables and constraints, so we could get an $O(\log n \log k)$ approximation in time $n^{O(w \log w)}$. 

The main technical hurdle that prevents us from using this technique to \RestrictedGroupSNDP directly 
is that there was no systematic way to generate a ``good'' DP table for arbitrary connectivity demand $k$.
(The previous result was already complicated even for $k=1$.)  
This is where we need to depart from the previous work. 
We devise a new concept that allows us to systematically create such a DP table for any connectivity demand $k$. 
Our DP table has size $n \cdot f(k,w)$ for some function $k$ and $w$, 
and it admits the same randomized rounding scheme in time $n^{g(k,w)}$, therefore yielding the main result. 

As by-products, we obtain new algorithms for some well-studied variants of \sndp, 
whose running time depends on the treewidth of the input graph (in particular, $n \cdot f(k,w)$). 

\shortparagraph{Subset connectivity problems:} 
{\sf Subset $k$-Connectivity} is a well-studied \sndp problem
(\SubsetkEC and \SubsetkVC for edge and vertex connectivity, respectively). 
In this setting, all pairs of terminals have the same demands, i.e., $k(u,v) = k$ for all $u,v \in T$. 
This is a natural generalization of Steiner tree that has received  attention~\cite{CheriyanV07,Nutov12,Laekhanukit15}. 

\begin{theorem} 
There are exact algorithms for \SubsetkEC and \SubsetkVC that run in time $f_1(k,w) n$ for some function $f_1$. This result holds for vertex- or edge-costs.  
\end{theorem}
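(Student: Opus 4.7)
The plan is to reuse the dynamic programming framework that drives our main technical building block (the DP for rooted \ecsndp/\vcsndp on bounded-treewidth graphs). The crucial observation is that \SubsetkEC and \SubsetkVC have no ``group'' component, so the DP table can be evaluated directly, bottom-up, with no need for the LP relaxation and randomized rounding that are invoked for \RestrictedGroupSNDP; straightforward DP computation yields the optimum exactly.

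First, I would compute a nice tree decomposition of width $w$ in linear time, with the standard four node types (introduce vertex, introduce edge, forget, join). At each bag $B_t$ the DP state stores, for each ``$k$-connectivity type,'' the minimum cost of a partial subgraph on the already-processed vertices realizing that type. Two partial solutions $H_1, H_2$ share a type iff, for every completion $H^+$ through the remaining bags, $H_1\cup H^+$ and $H_2 \cup H^+$ agree on the min $k$-edge-cut (respectively min $k$-vertex-cut) between every pair of terminals, together with a bit recording, for each equivalence class of forgotten vertices, whether that class already contains a terminal.

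Such a type is representable by a bounded-size combinatorial object on $B_t$: for the edge case, a $k$-truncated mimicking network, i.e., a capacitated multigraph whose capacities lie in $\{0,1,\dots,k\}$; for the vertex case, an analogous object augmented with markings recording openly-vertex-disjoint path behavior. The number of such objects is bounded by some $f_1(k,w)$, independent of $n$ and of the costs. Transitions at all four node types are routine case analyses --- union-and-truncate at join, capacity update at introduce-edge, contraction at forget --- and the optimum is read off at the root from the unique state in which every terminal class is $k$-connected to every other. Switching between vertex- and edge-costs is trivial: vertex costs are charged at introduce-vertex nodes and edge costs at introduce-edge nodes, so the DP handles both models with no structural change.

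I expect the main obstacle to be the join node for \SubsetkVC: combining two child states while respecting the \emph{openly}-vertex-disjoint path condition requires careful bookkeeping of which bag vertices may simultaneously serve as interior nodes of paths originating in both subtrees. For edge-connectivity the naive ``overlay then $k$-truncate'' suffices, but for vertex-connectivity the state must encode enough information about path-internal usage to rule out illegal reuse of bag vertices --- this is where the bounded-size invariant needs to be designed most carefully so that $f_1(k,w)$ remains finite while preserving correctness.
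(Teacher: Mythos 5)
Your route via $k$-truncated mimicking networks is genuinely different from the paper's, which instead (i) reduces \SubsetkEC to \RootedECSNDP by transitivity of edge-connectivity, and \SubsetkVC to \RootedVCSNDP with a set $R$ of $k$ roots (an arbitrary $k$-subset of the terminals, all inserted into every bag), and then (ii) runs a DP whose state at a bag is a collection of pairs $(\vec\Gamma,\vec\Delta)$ of $k$-tuples of connection sets, one pair per demand, where $\vec\Gamma$ records connectivity already realized inside $G_t$ and $\vec\Delta$ records a \emph{promise} about the global solution; a local$\Leftrightarrow$global equivalence lemma certifies that locally consistent promises are always realizable. A mimicking-network DP could in principle be made to work, but as written your proof has a concrete gap.

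The gap is the treatment of forgotten terminals. Your type records, besides the truncated cut structure on the current bag $B_t$, only ``a bit per equivalence class of forgotten vertices recording whether it contains a terminal.'' That is the connectivity-$1$ (Steiner-tree) bookkeeping. For $k\ge 2$, a terminal $v$ that has been forgotten but is not yet $k$-connected to the other terminals cannot be summarized by one bit: any future cut separating $v$ from another terminal decomposes into a part inside $G_t$ and a part outside, so the state must carry $v$'s entire truncated cut interface toward $B_t$ (a function from subsets of $B_t$ to $\{0,\dots,k\}$), updated as the bag changes, and may merge only forgotten terminals whose interfaces coincide. Moreover, for two forgotten terminals $v,v'$ the mutual min-cut is not determined by their separate interfaces to the bag, so tracking ``the min $k$-cut between every pair of terminals'' is not achievable with a bag-local state; you need the reduction the paper performs first --- for edge-connectivity, fix a single root (transitivity makes this lossless); for vertex-connectivity, where transitivity fails, fix $k$ roots and keep them in every bag --- so that every demand involves a vertex that is always present. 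Finally, even after that reduction, the DP must decide at the moment $v$ is forgotten whether its requirement \emph{will} be met by the eventual global solution; this is exactly what the paper's $\Delta$-promises and the equivalence lemma provide, and your proposal has no mechanism playing that role. None of this is unfixable, but it is the substance of the proof rather than ``routine case analysis,'' and the join-node difficulty you flag for \SubsetkVC, while real, is secondary to it.
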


\shortparagraph{Rooted SNDP:} Another setting that has been studied in the context of vertex connectivity requirements is the \RootedSNDP{}~\cite{ChuzhoyK08,Nutov12}. In this problem, there is a designated terminal vertex $r \in T$, and all positive connectivity requirements are enforced only between $r$ and other terminals, i.e. $k(u,v) >0$ only if $u=  r$ or $v=r$.  
For the edge connectivity setting, \RootedSNDP{} captures \SubsetkEC\footnote{This is due to the transitivity of edge-connectivity. Specifically, any vertices $u,w$ that have $k$ edge-disjoint paths connecting to the root $r$ also have $k$ edge-disjoint paths between themselves.}

\begin{theorem} 
There are exact algorithms for {\sf Rooted} \ecsndp and {\sf Rooted} \vcsndp that run in time $f_2(k,w) n$ for some function $f_2$. This result holds for costs on vertices or edges.  
\end{theorem}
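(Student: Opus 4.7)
The plan is to prove this by straightforward dynamic programming on a nice tree decomposition of $G$, reusing exactly the DP table whose existence is asserted in the framework summarized earlier. That framework constructs, for any \sndp{} variant on a graph of treewidth $w$, a table of size $n \cdot f(k,w)$ whose entries are in bijection with the ``sketches'' of feasible partial solutions. The reason \RestrictedGroupSNDP{} forces an additional LP-rounding layer is that each group must commit to one representative; but in \RootedSNDP{} the terminals are prescribed, so there is no combinatorial choice to make and the DP can simply be evaluated bottom-up in time proportional to its size.

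First I would compute, in linear time, a nice tree decomposition $(T,\{X_t\}_t)$ of $G$ of width $w$, using the standard leaf, introduce-vertex, introduce-edge, forget, and join nodes, and enforce $r \in X_t$ for every bag (at the cost of increasing the width by one). At a node $t$ with subgraph $G_t$, a partial solution $H_t \subseteq G_t$ is summarized by a \emph{connectivity sketch} $\sigma_t$ on $X_t$. For \ecsndp{} the sketch is the capped $r$-rooted cut function $S \mapsto \min(k,|\delta_{H_t}(S)|)$ for $S \subseteq X_t$; for \vcsndp{} one records, for each pair $u,v \in X_t \cup \{r\}$, the capped local connectivity $\min(k,\kappa_{H_t}(u,v))$, augmented with a linkage witness discussed below. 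By Menger's theorem and the fact that $X_t$ separates $G_t$ from the rest of $G$, two partial solutions with the same sketch are interchangeable: any extension produces a globally feasible subgraph for one iff for the other. A crude counting shows that the number of sketches is bounded by some $f(k,w)$.

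The DP table $D[t,\sigma]$ stores the minimum cost of a partial solution $H_t \subseteq G_t$ whose sketch is $\sigma$. Each transition can be computed in $f(k,w)$ time: introducing an edge or vertex modifies $\sigma$ by a local update inside $X_t$; a forget node projects onto the smaller bag, which for edge cuts is Gomory--Hu style contraction and for vertex cuts uses the linkage witness to decide how much connectivity survives; and a join node combines two sketches by recomputing cuts in the ``glue on $X_t$'' graph of size $O(w)$. The final optimum is $\min_\sigma D[\rootn(T),\sigma]$ taken over sketches whose implied $(r,t_i)$-connectivities meet the demands $k(r,t_i)$. Edge-cost and vertex-cost variants are handled uniformly, since an element's cost enters additively at the introduce-step of that element.

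The main obstacle will be the vertex-connectivity case. Unlike edge cuts, a vertex cut separating $r$ from a not-yet-processed terminal can pass through interior vertices of $H_t$ that have already been forgotten out of the bag, so the pair-connectivity function on $X_t$ alone is not a sufficient statistic. The fix, which dictates the precise form of $f_2$, is to augment $\sigma_t$ with a \emph{portal linkage}: for every $j \le k$ and every subset $S \subseteq X_t$, record whether $H_t$ contains a family of $j$ internally vertex-disjoint $r$-to-$S$ paths whose interior lies entirely in $G_t \setminus X_t$. Establishing that this augmented sketch is invariant under sketch-preserving changes to $H_t$ is the real correctness argument; once it is in place, the join and forget rules become mechanical, and the overall running time is $f_2(k,w) \cdot n$ as claimed.
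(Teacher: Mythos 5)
There is a genuine gap, and it is the central difficulty that the paper's machinery is built to overcome: your sketch $\sigma_t$ is a function of the partial solution $H_t\subseteq G_t$ only (a bottom-up statistic), yet the demands must be certified for terminals $\gamma_i$ that are forgotten long before the rest of the solution is known. Your final check ``$\min_\sigma D[\rootn(T),\sigma]$ over sketches whose implied $(r,\gamma_i)$-connectivities meet the demands'' cannot be performed: at the root bag the forgotten terminals do not appear in $X_{\rootn(T)}$, so the root sketch implies nothing about them. Moving the check to the forget node of $\gamma_i$ does not work either, because some of the $k(r,\gamma_i)$ disjoint paths in the final solution may leave $G_t$ through $X_t$, travel through the part of the graph not yet decided, and re-enter; whether the demand is satisfiable therefore depends on the (unknown) top part. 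The paper resolves exactly this by pairing the bottom-up connection set $\Gamma_t$ with a guessed top-down ``global'' set $\Delta_t$ (the promised connectivity in the whole of $G$), checking $(r,\gamma_i)\in\Delta_j$ at the topmost bag containing $\gamma_i$, and proving via the local $\Leftrightarrow$ global equivalence lemma that the local consistency rules force every such promise to be honest. Your proposal has no analogue of $\Delta$, so the DP as described is not sound.

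A secondary concern is the sufficiency of the sketches themselves. The paper sidesteps flows and cuts entirely: by Menger, $k$ disjoint $r$--$v$ paths exist iff the solution can be partitioned into $k$ parts each containing an $r$--$v$ path, so it enumerates, per demand, a $k$-tuple of reachability relations (one per part) and takes a ``direct product'' of connectivity-$1$ checks; sufficiency is then immediate. Your capped terminal cut function is plausibly adequate for edge connectivity between vertices of $X_t$, but for vertex connectivity the proposed statistic (capped pairwise $\kappa$ plus $r$-to-$S$ linkages with interiors in $G_t\setminus X_t$) is not obviously closed under the join/forget operations: a single global $r$--$v$ path may cross $X_t$ several times, so one needs joint realizability of systems of internally disjoint paths between arbitrary boundary pairs, not only fans out of $r$. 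You flag this as ``the real correctness argument'' but do not supply it, whereas in the paper it is precisely what the vertex profiles $\vec{Z}$, the modified closure $\tc^*_Z$, and \Cref{lem:vertex:unify} deliver. To repair your write-up along the paper's lines, replace the cut-function sketch by, for each demand, a family of $k$-tuples $(\vec{\Gamma},\vec{\Delta})$ (resp.\ triples $(\vec Z,\vec{\Gamma}^*,\vec{\Delta}^*)$) and the associated local consistency relation.
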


\shortparagraph{Further technical overview:} Let us illustrate how our approach is used to generate the DP table, amenable for randomized rounding. 
The following discussion assumes a certain familiarity with the notion of treewidth and DP algorithms in low-treewidth graphs.  

Given graph $G= (V,E)$, let $\tset$ be a tree decomposition of $G$ having width $w$, i.e. each bag $t \in V(\tset)$ corresponds to a subset $X_t \subseteq V(G): |X_t| \leq w$. Let $\tset_t$ denote the subtree of $\tset$ rooted at $t$.  
For each bag $t \in V(\tset)$, let $G_t$ denote the subgraph induced on all bags belonging to the subtree of $\tset$ rooted at $t$, i.e. $G_t = G[\bigcup_{t \in \tset_t} X_t]$.  
At a high level, DPs for minimization problems in low-treewidth graphs proceed as follows. 
For each ``bag'' $t$, there is a profile $\pi_t \in \Pi $ for $t$, and we define a DP cell $c[t, \pi_t]$ for each possible such profile, which stores the minimum-cost of a solution (a subgraph of $G_t$) that is consistent with the profile $\pi_t$.  
Then, a recursive rule is applied: Let $t', t''$ be the left and right children of $t$ in $\tset$ respectively. 
The DP makes a choice to ``buy'' a subset of edges  $Y \subseteq E(G[X_t])$ (that appear in bag $X_t$) and derives the cost by minimizing over all profiles $\pi_{t'}, \pi_{t''}$ that are ``consistent'' with $\pi_t$:  
\[c[t, \pi_t] = \min_{\pi_{t'}, \pi_{t''}, Y: (\pi_{t'}, \pi_{t''}, Y) \bowtie \pi_t} \left( \mathrm{cost}(Y) + c[t',\pi_{t'}] + c[t'', \pi_{t''}] \right) \]   
where the sign $(\pi_{t'}, \pi_{t''}, Y) \bowtie \pi_t$ represents the notion of consistency between the profiles.
Different optimization problems have different profiles and consistency rules.  
Often, consistency rules that are designed for connectivity-1 problems (such as Steiner tree) are not easily generalizable to higher connectivity problems (such as \sndp).

\newcommand{\consrule}[0]{{\Bumpeq}}
In this paper, we devise a new consistency rule (abbreviated by $\consrule{}$) for checking ``reachability'' (or connectivity 1) in a graph, which allows for easy generalization to handle high connectivity problems.  

Roughly speaking, our consistency rule $\consrule{}$ solves the Steiner tree problem (connectivity-1 problem).  
To solve a connectivity-$k$ problem, we have a DP cell $c[t, \vec{\pi}]$ for each $\vec{\pi} \in \Pi^k$.  
Then the consistency check is a ``direct product'' test for all coordinates, i.e., 
\[(\vec{\pi}_{t'}, \vec{\pi}_{t''},\vec{Y}) \consrule{}^k  \vec{\pi}_t \Longleftrightarrow 
(\forall j \in [k]) (\pi_{t',j}, \pi_{t'',j}, Y_j) \consrule{} \pi_{t,j}  
\] 
In this way, our new concept makes it a relatively simple task to generalize a DP for connectivity-1 problems to a DP for connectivity-k problems (and facilitate the proof of correctness).  
There is a slight change in the way DPs are designed for each problem, but they follow the same principle.

\shortparagraph{Organization:} We develop our techniques over several sections, and along the way, show non-trivial applications for various \sndp problems.
\Cref{sec:prelim} provides some notation and important definitions. 
\Cref{sec:new-concept} presents the new viewpoint for designing DP and presents a simple showcase by deriving (known) results. 
\Cref{sec:ec-sndp} presents algorithms for \ecsndp.
Lastly, \Cref{sec:rgsndp} presents an approximation algorithm for group connectivity problems.

\section{Preliminaries}
\label{sec:prelim}
\shortparagraph{Tree decomposition:} Let $G$ be any graph.
A {\em tree decomposition} of $G$ is a tree $\tset$ with a collection of {\em bags} $\{X_t\}_{t \in V(\tset)}\subseteq 2^{V(G)}$ (i.e., each node of $\tset$ is associated with a subset of nodes of $V(G)$) that satisfies the following properties: 

\begin{itemize} 
\item $V(G) = \bigcup_{t\in V(\tset)} X_t$ 
\item For any edge $uv \in E(G)$, there is a bag $X_t$ such that $u,v \in X_t$.

\item For each vertex $v \in V(G)$, the collection of nodes $t$ whose bags $X_t$ contain $v$ induces a connected subgraph of $\tset$. That is, $\tset[\{t\in V(\tset): v\in X_t\}]$ is a subtree of $\tset$.
\end{itemize}  

The treewidth of $G$, denoted $tw(G)$, is the minimum integer $k$ for which there exists a tree decomposition $(\tset, \{X_t\}_{t \in V(\tset)})$ such that $\max |X_t| \leq k+1$ ($\max |X_t|-1$ is the \emph{width} of $\tset$).

Fix a tree decomposition $(\tset, \{X_t\})$ with the stated properties. 
For each node $t \in V(\tset)$, denote by $\tset_t$ the subtree of $\tset$ rooted at $t$. 
We also define $G_t$ as the subgraph induced by $\tset_t$; that is, $G_t = G\left[\bigcup_{t' \in \tset_t} X_{t'}\right]$. 

For each $v \in V$, let $t_v$ denote the topmost bag for which $v \in X_{t_v}$. 
For each $t \in V(\tset)$, we say that an edge $uv \in E(G)$ appears in the bag $t$ if $u,v \in X_t$, and \emph{only} if $t$ is the topmost bag in which this happens.
We denote the edges inside the bag $X_t$ by $E_t$. 
For a subset of bags, $\sset \subseteq V(\tset)$, we define $X_{\sset} := \bigcup_{t \in \mathcal S} X_t$. 

We will use the following result, which shows that a tree decomposition of $G$ of width $O(tw(G))$ is computable in time $O(2^{O(tw(G))} n)$. 

\begin{theorem}[\cite{BodlaenderDDFLP16}] 
\label{thm: good decomp} 
There is an algorithm that, given a graph $G$, runs in time $O(2^{O(tw(G))} n)$ and finds the tree decomposition $(\tset, \{X_t\}_{t \in V(\tset)})$ such that $|X_t| \leq 5 tw(G)$ for all $t$.  
\end{theorem}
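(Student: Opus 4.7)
The stated bound is cited from \cite{BodlaenderDDFLP16}, so I will sketch the high-level strategy one would use to reprove it. The plan is to combine a parameter-doubling scheme with a recursive balanced-separator decomposition. First, I would iterate over guesses $k = 1, 2, 4, 8, \ldots$; for each $k$ the core subroutine either outputs a tree decomposition of $G$ of width at most $5k$ or certifies that $tw(G) > k$. Since the last successful guess is at most $2 \cdot tw(G)$, the total running time is dominated by that call, and if each call runs in $2^{O(k)} n$ time, the overall bound is $2^{O(tw(G))} n$.

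The heart of the argument is the recursive core subroutine. The plan is to exploit the classical fact that if $tw(G) \leq k$ then every induced subgraph $H$ admits, with respect to any vertex weighting, a vertex separator $S$ of size at most $k+1$ whose removal leaves components each carrying at most half the weight. The recursion maintains an induced subgraph $H$ together with a ``boundary set'' $B \subseteq V(H)$ of size at most $4k$ that must appear in the current bag. At each recursive node, I would search for a separator $S$ of size at most $k+1$ that simultaneously balances the vertices and the boundary; the returned bag is $B \cup S$ (of size at most $5k+1$), and I recurse on each connected component of $H - S$, equipped with its inherited share of $B \cup S$ as its new boundary. Since the weighting can be chosen so that each recursive piece loses a constant fraction of either $|V(H)|$ or $|B|$, the recursion has depth $O(\log n)$ and a careful accounting keeps the boundary size capped at $4k$ throughout.

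The main obstacle, and the step where the single-exponential running time is truly earned, is implementing the balanced-separator search in $2^{O(k)} \cdot \poly(n)$ time per recursive call. The approach is to invoke an FPT algorithm for balanced vertex separation parameterized by $k$ (using important-separator enumeration together with color-coding-style guessing of how the boundary intersects the components after removal), and to pair it with a ``safe contraction'' or improvement step: whenever a promising separator of size $\le k+1$ fails to deliver the required balance, one either finds a local modification that reduces progress measure or extracts a combinatorial witness proving $tw(G) > k$ (for instance, a $(k+2)$-connected minor or a well-linked set of size $>k+1$). Once the separator search succeeds, gluing the recursively returned decompositions is straightforward: one creates a new bag $B \cup S$ at the node and attaches the children's decompositions to it, and verifying the three tree-decomposition axioms reduces to checking that $S$ separates the recursive pieces and that every inherited boundary vertex appears in the parent bag. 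The conceptual hurdle throughout is the separator-search step together with the amortized bookkeeping that keeps boundary sizes under control, which is exactly the content of \cite{BodlaenderDDFLP16}.
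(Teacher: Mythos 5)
This theorem is not proved in the paper at all; it is imported verbatim from \cite{BodlaenderDDFLP16}, so there is no in-paper argument to compare against. Your sketch is the classical Robertson--Seymour/Reed-style approach (parameter doubling plus recursive balanced separation of a maintained boundary set), and as a description of \emph{an} approximation algorithm for treewidth it is essentially sound. However, it does not prove the statement as claimed, and the gap is precisely the part that makes the cited result nontrivial.

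The issue is the running time. Your recursion spawns a separate call for each component of $H-S$ and bottoms out only when pieces are small, so the recursion tree has $\Theta(n)$ nodes in general; even if each balanced-separator search costs $2^{O(k)}\cdot\poly(n)$ (and the best elementary implementations cost at least linear time in $|V(H)|+|E(H)|$ per call), the total is $2^{O(k)}\, n\cdot\poly(n)$, or with careful charging $O(2^{O(k)} n\log n)$ or $O(2^{O(k)} n^2)$ --- the bounds obtained by Robertson--Seymour, Reed, and Amir. The theorem claims $O(2^{O(tw(G))}\, n)$, i.e.\ \emph{truly linear} dependence on $n$, and that is the entire contribution of \cite{BodlaenderDDFLP16}. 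Their algorithm does not proceed by repeated balanced separation from scratch: it recursively contracts a maximal matching to obtain a graph of roughly half the size, recurses to get a decomposition of the contracted graph, lifts it to a decomposition of $G$ of width $O(k)$, and then runs a ``compression'' routine --- a dynamic program over the approximate decomposition --- that either improves the width to at most $5k+4$ or certifies $tw(G)>k$. The geometric shrinkage across recursion levels is what yields the linear total time. Your sketch, as written, cannot be completed to give the stated bound without replacing its core recursive structure by something of this kind.
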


In order to simplify notation, we assume that $\tset$ is a binary tree. Furthermore, we require the height of $\tset$ to be $O(\log n)$ in \Cref{sec:rgsndp}. The following lemma, based on the results of Bodlaender~\cite{Bodlaender88}, summarizes the properties we assume.

\begin{lemma}[in \cite{ChalermsookDLV17}, based on \cite{Bodlaender88}]
\label[lemma]{lem:treewidth:props}
There is a tree decomposition $(\tset, \{X_t\}_{t \in V(\tset)})$ with the following properties:
\begin{inparaenum}[(i)]
\item the height of $\tset$ is at most $O(\log n)$; 
\item each bag $X_t$ satisfies $|X_t| \leq O(w)$; 
\item every leaf bag has no edges ($E_t = \emptyset$ for leaf $t \in \tset$); 
\item every non-leaf has exactly $2$ children 
\end{inparaenum}
\end{lemma}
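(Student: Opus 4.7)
The plan is to obtain the desired decomposition by starting with an arbitrary small-width decomposition and repeatedly modifying its structure, each step preserving width up to a constant factor. I would first invoke \Cref{thm: good decomp} to obtain a tree decomposition $(\tset_0, \{X_t\}_{t \in V(\tset_0)})$ of $G$ with $|X_t| \leq 5\,tw(G) = O(w)$; this gives property (ii) but none of the others. Next, I would turn $\tset_0$ into a binary tree (a preparatory step for the balancing below) by replacing each node with $d > 2$ children by a caterpillar of $d-1$ copies of its bag whose off-path edges go to the original children, and by attaching a dummy empty leaf to each node with only a single child. These replacements keep width unchanged; the height may grow, but it will be fixed by the next step.

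The main step is to balance the binary tree decomposition to logarithmic height, which I would import as a black box from Bodlaender \cite{Bodlaender88}. The idea is centroid-based: given the current tree, find a separator edge whose removal splits it into two subtrees, each of size at most a constant fraction of the whole; place the endpoint bags of this edge near the root of the new decomposition and recurse on the two subtrees. A constant-fraction size reduction at each level yields $O(\log n)$ depth, and the construction preserves validity of the tree-decomposition axioms by augmenting new bags with separator bags of their ancestors when necessary, at the cost of only a constant blow-up in width. This step is the main technical obstacle, and I would cite \cite{Bodlaender88} rather than redevelop it. A final rebinarization, identical to the one above, can be applied if the balancing step breaks the binary structure.

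It remains only to enforce (iii). For each current leaf $t$ with $E_t \neq \emptyset$, I attach two new leaf children $t_1, t_2$ with $X_{t_1} = X_{t_2} = X_t$. By the convention that an edge appears only in the topmost bag containing both its endpoints, every edge of $E_t$ continues to appear in $t$, which is now an internal node, so $E_{t_1} = E_{t_2} = \emptyset$. This operation adds at most one level to the tree and preserves (i), (ii), and (iv), producing a decomposition with all four required properties.
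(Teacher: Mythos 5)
Your construction is correct and is essentially the argument behind the result as cited: the paper itself gives no proof of this lemma but imports it from \cite{ChalermsookDLV17}, whose argument is exactly your pipeline of taking an $O(w)$-width decomposition, balancing it to depth $O(\log n)$ via Bodlaender~\cite{Bodlaender88} at a constant-factor width blow-up, and then binarizing and padding with duplicate/empty bags. Your final leaf fix (attaching two children with bag $X_t$) is sound under the paper's convention that each edge is charged only to the topmost bag containing both endpoints, so nothing further is needed.
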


\shortparagraph{Connection sets and operators:} Let $S \subseteq V$. 
A connection set $\Lambda$ over $S$ is a subset of $S \times S$ which will be used to list all pairs that are connected via a path, i.e., $(u,v) \in \Lambda$ iff there is a path connecting $u$ to $v$.

\begin{figure}
  \vspace{-3.8em}
  \begin{center}
    \includegraphics[trim={2cm 6cm 0 0},clip,width=0.48\textwidth]{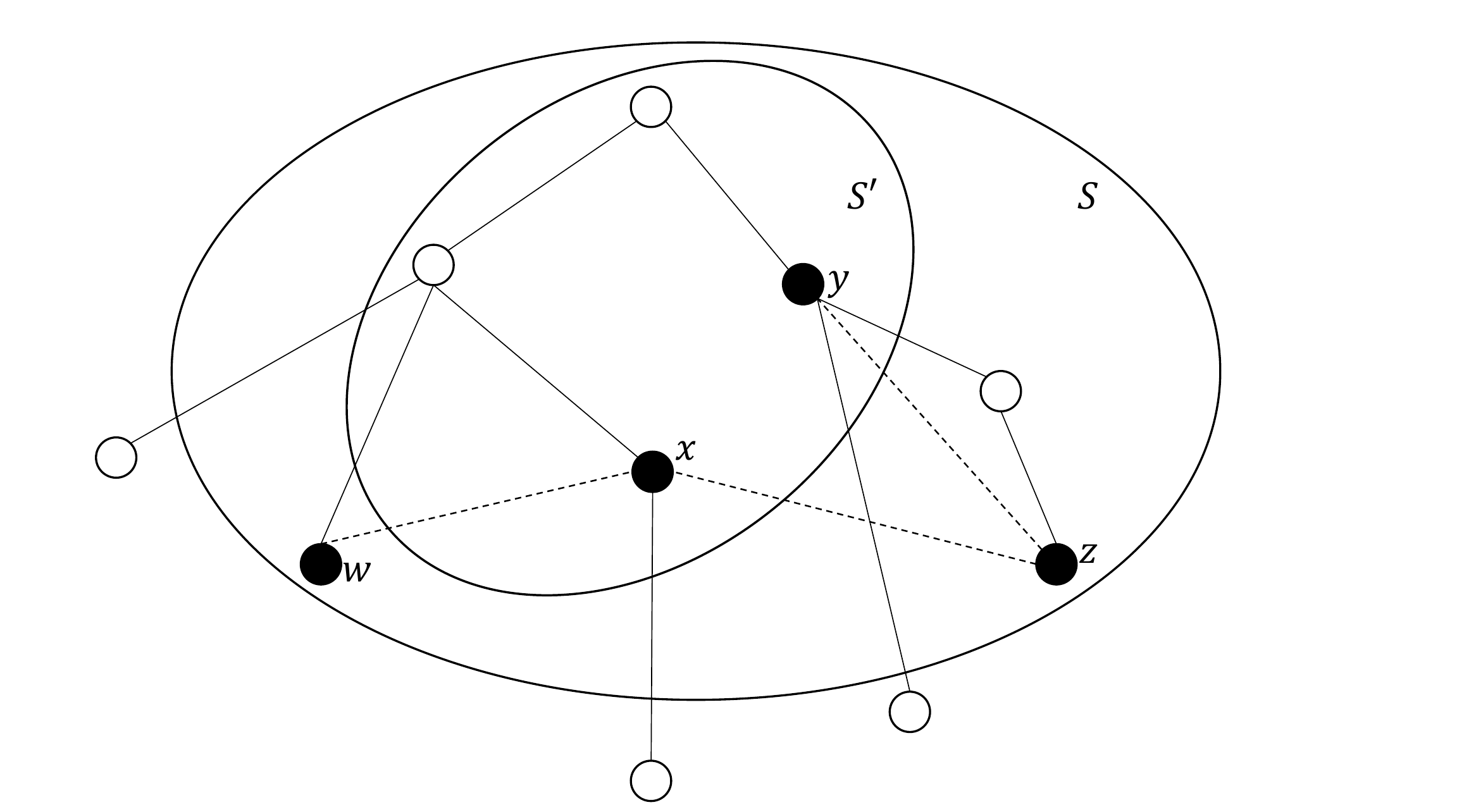} 
  \end{center}

  \caption{Connection sets, transitive closures and projections. \newline{}
      \mbox{$\Lambda = \{(w,x), (x,z), (z,y)\}$} (dotted connections).
      \mbox{$\tc(\Lambda) = \{(w,x), (x,y), (y,z), (w, y), (x, z), (w, z)\}$.} 
      \mbox{$\restr{\tc(\Lambda)}_{S'} = \{(x, y)\}$}. 
  }
  \label{figf:connection}
\end{figure}

Let $\Lambda$ be a connection set over $S$. 
The {\em transitive closure} operator, denoted by $\tc(\cdot)$, is defined naturally such that $\tc(\Lambda)$ contains all pairs $(w,w')$ for which there is a sequence $(w=w_0,w_1,\ldots, w_q = w')$ and $(w_i, w_{i+1}) \in \Lambda$ for all $i <q$.  
Let $S' \subseteq S$. The {\em projection operator} ``$\mid$'' is defined such that $\restr{\Lambda}_{S'}= \Lambda \cap (S' \times S')$. See \Cref{figf:connection} for an illustration.  
   
Given two connection sets $\Lambda_1$ of $S_1$ and $\Lambda_2$ of $S_2$, the union $\Lambda_1 \cup \Lambda_2$ is a connection set over $S_1 \cup S_2$.

\vspace{0.3em}

\section{New Key Concept: Global \texorpdfstring{$\Leftrightarrow$}{^^e2^^87^^94} Local Checking for DP}
\label{sec:new-concept}

This section introduces the key concept devised for handling all our problems systematically.

\shortparagraph{High-level intuition:} Our DP will try to maintain a pair of local and global information about connectivity in the graph.
Roughly speaking, a local connection set $\Gamma_t$ for $t$ (more precisely, for $X_t$) gives information about connectivity of the solution inside the subgraph $G_t$ (the subgraph induced in subtree $\tset_t$), while the other connection $\Delta_t$ for $t$ gives information about connectivity of the global solution (i.e., the solution for the whole graph $G$).  
 
For instance, if we have a tentative solution $Y \subseteq E(G)$, we would like to have the information about the reachability of $Y$ inside each bag, i.e., $\Delta_t = \restr{\tc(Y)}_t$, so we could check the reachability between $u$ and $v$ simply by looking at whether $(u,v) \in \Delta_t$. 
However, a DP that is executing at bag $t$ may not have this global information, and this often leads to complicated rules to handle this situation. 

We observe that global information can be passed along to all cells in the DP with simple local rules so that checking whether the connectivity requirements are satisfied can be done locally inside each DP cell.  
In the next section, we elaborate on this more formally.

\shortparagraph{Equivalence:} One could imagine having a DP cell $c[t, \Gamma_t,\Delta_t]$ for all possible connection sets $\Gamma_t,\Delta_t$, which makes a decision on $Y_t$, the set of edges bought by the solution when executing the DP.
The roles of $\Gamma_t$ and $\Delta_t$ are to give information about local and global reachability, respectively.  
We are seeking a solution $Y$ that is ``consistent'' with these profiles, where $Y$ can be partitioned based on the tree $\tset$ as $Y_t = Y \cap E_t$.   

Given $Y$, we say that the pairs $\{(\Gamma_t, \Delta_t)\}_{t \in V(\tset)}$ satisfy the \emph{local} (resp., \emph{global})
\emph{connectivity definition} if, for every bag $t \in V(\tset)$ (having left and right children as $t'$ and $t''$, respectively),

\noindent
\begin{minipage}[t]{.5\textwidth}
   \vspace{-2em}
   \begin{align*}
      \text{\bf Local}& \\
      \Gamma_t &:= 
      \begin{cases}
         \emptyset                                          & \text{if $t$ is a leaf bag} \\
         \restr{\tc\left(\Gamma_{t'}\cup\Gamma_{t''}\cup Y_t\right)}_t  & \text{otherwise} 
      \end{cases} \\
      \Delta_t &:= 
      \begin{cases}
         \Gamma_t                                        & \text{if } t = \rootn(\tset) \\
         \restr{\tc(\Delta_{p(t)}\cup\Gamma_t)}_t         ~~~~~~& \text{otherwise} 
      \end{cases} 
   \end{align*}
\end{minipage}%
\begin{minipage}[t]{.5\textwidth}
   \vspace{-2em}
   \begin{align*}
      \text{\bf Global}& \\
      \Gamma_t &:= \restr{\tc\left(Y \cap E_{\tset_t}\right)}_t \\
      \Delta_t &:= \restr{\tc\left(Y\right)}_t 
   \end{align*}
\end{minipage}

where the projection operator on $X_t$ is simplified as $\mid_{t}$. 

The main idea is that the local connectivity definition gives us ``local'' rules that enforce consistency of consecutive bags, and this would be suitable for being embedded into a DP.  
The global connectivity rules, however, are not easily encoded into DP, but it is easy to argue intuitively and formally about their properties. 
The following lemma (proof in \Cref{appendix:proofs:unify}) shows that the local and global connectivity definitions are, in fact, equivalent.
\begin{lemma}
\label[lemma]{lem:edge:unify}
Let $Y_t \subseteq E_t$ be a subset of edges and $(\Gamma_t, \Delta_t)$ a pair of connectivity sets for every $t \in V(\tset)$. %
Then, the pairs $(\Gamma_t, \Delta_t)$ satisfy the local connectivity definition iff they satisfy the global connectivity definition.
\end{lemma}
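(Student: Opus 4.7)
The plan is to observe that both the local and global definitions uniquely determine $(\Gamma_t, \Delta_t)$ for every $t \in V(\tset)$ once $Y$ (equivalently, all the $Y_t$'s) is fixed: the local one does so by a well-founded recursion (bottom-up for $\Gamma_t$, top-down for $\Delta_t$), while the global one gives closed-form expressions. Consequently, proving the ``iff'' reduces to showing that the two definitions assign the same value to each $\Gamma_t$ and each $\Delta_t$. I would split this into two independent claims:
\emph{(i)} for every bag $t$, $\Gamma_t^{\mathrm{local}} = \restr{\tc(Y \cap E_{\tset_t})}_t$; and
\emph{(ii)} for every bag $t$, $\Delta_t^{\mathrm{local}} = \restr{\tc(Y)}_t$.

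For \emph{(i)}, I would induct on the subtree rooted at $t$. The base case is a leaf bag, where $E_t = \emptyset$ by \Cref{lem:treewidth:props}, so both sides are empty. For the inductive step at an internal bag with children $t',t''$, first note that $E_{\tset_t}$ partitions as $E_t \sqcup E_{\tset_{t'}} \sqcup E_{\tset_{t''}}$, so $Y \cap E_{\tset_t}$ decomposes accordingly. The inclusion $\supseteq$ (local contained in global) is straightforward: each element of $\Gamma_{t'}$ or $\Gamma_{t''}$ already corresponds, by induction, to a path in $Y$ restricted to the respective subtree, and each element of $Y_t$ is a real edge, so stitching them gives a path in $Y \cap E_{\tset_t}$. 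The inclusion $\subseteq$ is the interesting part: given a path $P$ in $Y \cap E_{\tset_t}$ with endpoints $u,v \in X_t$, I would decompose $P$ into maximal monochromatic runs of edges in $E_{\tset_{t'}}$, $E_{\tset_{t''}}$, and $E_t$, and argue, via the running intersection property, that every transition vertex lies in $X_t \cap X_{t'}$ or $X_t \cap X_{t''}$ (since such a vertex is incident to edges whose topmost bags lie on opposite sides of $t$, hence belongs to every bag on the path connecting them, including $t$ and the appropriate child). Thus every run in $G_{t'}$ has both endpoints in $X_{t'}$ and, by induction, corresponds to a pair in $\Gamma_{t'}$; similarly for $t''$ and for $E_t$-edges. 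Concatenating yields a witness that $(u,v) \in \tc(\Gamma_{t'} \cup \Gamma_{t''} \cup Y_t)$.

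For \emph{(ii)}, I would induct top-down. At the root $r$ we have $E_{\tset_r} = E(G)$, so by \emph{(i)} $\Gamma_r = \restr{\tc(Y)}_r = \Delta_r^{\mathrm{global}}$, matching the local rule $\Delta_r := \Gamma_r$. For the inductive step at a non-root bag $t$ with parent $p(t)$, the inclusion $\subseteq$ follows immediately because $\Delta_{p(t)}$ and $\Gamma_t$ are both subsets of $\tc(Y)$ by the inductive hypothesis and \emph{(i)}, and $\tc$ is idempotent. For $\supseteq$, given $(u,v) \in \restr{\tc(Y)}_t$ witnessed by a path $P$ in $Y$, I would split $P$ into maximal runs of edges ``inside $E_{\tset_t}$'' versus ``outside $E_{\tset_t}$''. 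Running intersection again forces every transition vertex to lie in $X_t \cap X_{p(t)}$: such a vertex is incident to both a subtree-edge and an outside-edge, so it belongs to some bag inside $\tset_t$ and some bag outside, hence to every bag on the path between them, in particular to $t$ and $p(t)$. Inside-runs then give pairs in $\Gamma_t$ and outside-runs give pairs in $\Delta_{p(t)}$ by the top-down inductive hypothesis.

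I expect the main obstacle to be the careful path-splitting argument and, in particular, the clean application of the running intersection property at the transition vertices; the rest is bookkeeping. A small additional subtlety worth verifying is that $E_t \cap E_{\tset_{t'}} = \emptyset$ (because each edge appears in only one topmost bag), so the partition of $E_{\tset_t}$ used in \emph{(i)} is genuinely disjoint and the path-splitting is unambiguous.
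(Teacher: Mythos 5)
Your proposal is correct and follows essentially the same route as the paper: a bottom-up induction identifying $\Gamma_t$ with $\restr{\tc(Y\cap E_{\tset_t})}_t$, a top-down induction identifying $\Delta_t$ with $\restr{\tc(Y)}_t$, and, for the nontrivial inclusions, a decomposition of the witnessing path at its $X_t$-vertices justified by the running-intersection property (the paper isolates this as a ``Path Lemma''). The only cosmetic difference is that the paper proves a more general statement (\Cref{lem:mixed:unify}, with the operator $\tc^*_Z$ for vertex-connectivity) and derives \Cref{lem:edge:unify} as the special case $Z_t = X_t$, but the core argument is identical to yours.
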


The notions of local and global connectivity, as well as the equivalence
between them can be generalized both for the edge-connectivity version with
vertex-costs as well as vertex-connectivity with vertex-costs. %
We defer the details of this generalization to \Cref{sub:vcnw:extension}.

\shortparagraph{A warmup application: steiner trees.} We now show an approach that allows us to solve the Steiner Tree problem
exactly in $n w^{O(w)}$ time, given a tree decomposition of width $w$.
We remark that the best known algorithm due to Cygan et al.~\cite{CyganNPPRW11} runs in time $2^{O(w)} n$.  
In this problem, we are given graph $G=(V,E)$ with edge-costs and terminals $T = \{v_1,\ldots, v_h\}$, and the goal is to find a min-cost subset $E^* \subseteq E$ that connects all the terminals. 
For simplicity, we denote $v_1$ by ``root'' $r$, and the goal is to connect the root to all other terminals in $T \setminus r$.

Our DP table has a cell $c[t, \Gamma, \Delta]$ for every bag $t \in V(\tset)$
and every pair of connection sets $(\Gamma, \Delta)$ for $t$. %
We initialize the DP table by setting $c[t, \Gamma, \Delta]$ for all the leaf
bags, and setting certain cells as invalid (by setting $c[t, \Gamma, \Delta] = \infty)$:
\begin{itemize}
   \item For every leaf bag $t$ and every pair $(\Gamma, \Delta)$, 
   we set $c[t, \Gamma, \Delta] = 0$ if $\Gamma = \emptyset$ and $c[t,
   \Gamma, \Delta] = \infty$ otherwise.

   \item We mark the cells $(\rootn(\tset), \Gamma, \Delta)$ as invalid if
   $\Gamma \neq \Delta$.

   \item Let $v_i \in T$ be one of the terminals, and $t \in V(\tset)$ a bag. We
   mark a cell $(t, \Gamma, \Delta)$ as invalid if $v_i \in X_t$ but $(r, v_i)
   \not\in \Delta$.
\end{itemize}

For all other cells, we compute the bags from their children. %
Let $t$ be a bag with left-child $t'$ and right-child $t''$.
Let $(\Gamma,\Delta), (\Gamma', \Delta'), (\Gamma'', \Delta'')$ be pairs of connection sets for
$t, t', t''$, respectively, and $Y_t \subseteq E_t$. %
We say that $(\Gamma, \Delta)$ is consistent with $((\Gamma', \Delta'), (\Gamma'', \Delta''))$  via $Y_t$ 
(abbreviated by the notation $(\Gamma, \Delta) \cons{Y_t} ((\Gamma', \Delta'), (\Gamma'', \Delta''))$) if
\begin{align*}
\Gamma &= \restr{\tc(\Gamma' \cup \Gamma'' \cup Y)}_t &
\Delta' &= \restr{\tc(\Delta \cup \Gamma')}_{t'} &
\Delta'' &= \restr{\tc(\Delta \cup \Gamma'')}_{t''}
\end{align*}
Now, for any choice of valid DP cells $(t, \Gamma_t, \Delta_t)$ and edge subsets
$Y_t \subseteq E_t$ for every $t \in V(\tset)$ %
(notice that a DP solution uses precisely one cell per bag $t$), %
we apply \Cref{lem:edge:unify} to conclude that since the local
connectivity definition is satisfied for $(\Gamma_t, \Delta_t)$ pairs, so does
the global connectivity definition. %
Since, for every valid DP cell $(t, \Gamma, \Delta)$ such that $t$ contains
a terminal $v_i$, $(r, v_i) \in \Delta$, we conclude that every terminal is
connected to the root in the solution $Y = \bigcup_{t \in V(\tset)} Y_t$.

Conversely, given a solution $F \subseteq E(G)$, we can define $F_t := F \cap
E_t$, and a pair $(\Gamma_t, \Delta_t)$ for every $t \in V(\tset)$, 
using the global connectivity definition. %
\Cref{lem:edge:unify} implies that the pairs $(\Gamma_t, \Delta_t)$
satisfy the local connectivity definition, and therefore define valid DP cells
(notice that for every terminal $v_i$ , $F$ connects $v_i$ to the root, so
$(r, v_i) \in \Delta_t$ for every $t\in V(\tset)$ such that $v_i \in X_t$).
We thus establish that for every valid DP solution there is a corresponding
feasible solution \mbox{$F \subseteq E(G)$}, and vice-versa.

\begin{figure}[t]
   \centering
   \begin{overpic}[width=0.7\textwidth]{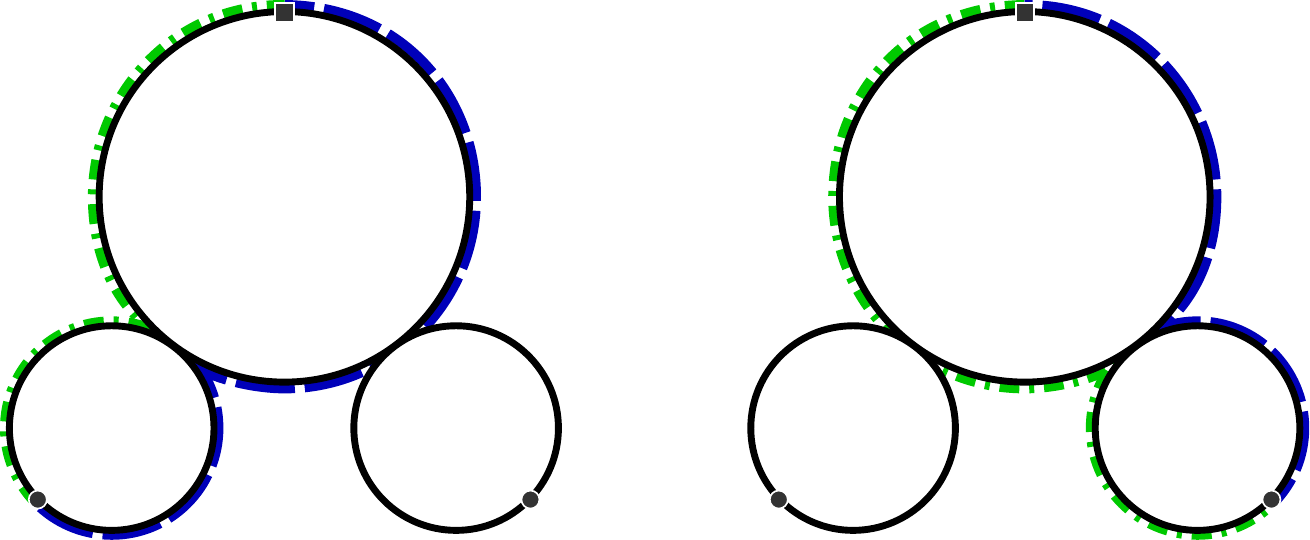}
      \put(79, 42) {$r$}
      \put(22, 42) {$r$}
      \put(0, 0) {$\gamma_1$}
      \put(40, 0) {$\gamma_2$}
      \put(57, 0) {$\gamma_1$}
      \put(97, 0) {$\gamma_2$}
   \end{overpic}
   \caption{Example of different partitions of edges into paths from demands
   $\gamma_1$, $\gamma_2$. On the left (resp., right), two paths from $r$ to
   $\gamma_1$ (resp., $\gamma_2$)}
   \label{fig:multiple-partitions}
\end{figure}

\section{Extension to High Connectivity}
\label{sec:ec-sndp} 

This section shows how to apply our framework to 
problems with high connectivity requirements.
We focus on edge-connectivity and leave 
the case of vertex-connectivity to \Cref{sub:vcnw:extension}.

When solving a problem in a high connectivity setting, there may be a
requirement of $k$ disjoint paths. %
In particular, for each demand pair $(u,v)$, there must be $k(u,v)$ disjoint
paths in the solution. So we could start naturally with a profile of the form:
\[(\Gamma_{t,1},\ldots, \Gamma_{t,k})(\Delta_{t,1},\ldots, \Delta_{t,k}) \] 
for each bag $t \in V(\tset)$, and enforce the local consistency conditions for each coordinate. %

However, this idea does not work as a different demand pair, say $(a,b)$, might use a path that belongs to different subgraphs $H_j$ as defined above. In other words, the disjoint paths for the demand pair $(a, b)$ might require a different partitioning of the solution set $H$. 
\Cref{fig:multiple-partitions} illustrates a case in which different demand pairs use different partitions. %
Therefore, we need to enumerate all possible ways for the demands to ``locally'' partition the graph and use them to support all $k$ disjoint paths for each one of them. This requires a more careful local consistency check between the DP cells. 

We consider the \RootedECSNDP problem with edge-costs as an example to explain how to apply our framework in high-connectivity. %
For convenience, we add $r$ to every bag. %
To avoid confusion, the root of the tree $\tset$ will be referred explicitly as $\rootn(\tset)$. 

The organization of this section is as follows. 
In \Cref{sec:EC-profile}, we explain the setup of the cells of the DP table and a high-level intuition about how the DP works.  
In \Cref{sub:ecew:dp} we describe the algorithm, in particular, how to compute the values of the DP table.
We leave the discussion of its correctness and running time to \Cref{sub:ecew:running_time}.

\subsection{Profiles}
\label{sec:EC-profile}

As in any standard dynamic programming approach based on tree decomposition, we have a profile for each bag $t$, which tries to solve the subproblem restricted to $G_t$ in some way. 

Let $t \in V(\tset)$ be a bag. 
A {\em connection profile} for $t$ is a k-tuple $\vec{\Gamma}= (\Gamma_1,\Gamma_2,\ldots, \Gamma_k)$ such that $\Gamma_i \subseteq X_t \times X_t$.
Let $\xset_t$ be the set of all connection profiles for $t$.  
A profile $\Psi$ of node $t$ is a collection of pairs of connection profiles $(\vec{\Gamma}, \vec{\Delta})$, i.e., $\Psi \subseteq \xset_t \times \xset_t$.  
A partial solution $F \subseteq E(G_t)$ is said to be {\em consistent} with profile $\Psi$ for $t$ if, for all $(\vec{\Gamma}, \vec{\Delta}) \in \Psi$,   
\begin{itemize} 
\item For each $(u,v) \in \Gamma_i$ and $(a,b) \in \Gamma_j$ for $i\neq j$, there are paths $P_{uv}, P_{ab} \subseteq F$ connecting the respective vertices in $G_t$ such that $P_{uv}$ and $P_{ab}$ are edge-disjoint. 
\item There is a global solution $F' \supseteq F$ such that, for each $(u,v) \in \Delta_i$ and $(a,b) \in \Delta_j$ for $i\neq j$, there are paths $Q_{uv}, Q_{ab} \subseteq F'$ connecting the respective vertices in $G$ such that $P_{uv}$ and $P_{ab}$ are edge-disjoint.  
\end{itemize}  
In other words, a solution consistent with a profile must ``implement'' all connectivity requirements by $\vec{\Gamma}$ and must be extensible to satisfy $\vec{\Delta}$.  

Passing down both local and global requirements in the DP table leads to a clean and simple DP algorithm.
Our DP table has a cell $c[t,\Psi]$ for each bag $t \in V(\tset)$ and each profile $\Psi$ for $t$. 
This cell tentatively stores the optimal cost of a solution consistent with profile $\Psi$.

\subsection{The DP}  
\label{sub:ecew:dp}

\shortparagraph{Valid cells:} Some table entries do not correspond to valid solutions of the problem, so we mark them as invalid and remove them from consideration (another way to think about this is that we initialize $c[t, \Psi] = \infty$ for all invalid cells), i.e., the following cells are invalid: 
\begin{itemize} 
\item Any leaf that has non-empty connectivity requirements is invalid. That is, $c[t, \Psi] = 0$ if $\Psi \subseteq \{(\emptyset, \ldots, \emptyset)\} \times \xset_t$; otherwise, $c[t, \Psi] = \infty$.  

\item Any cell that cannot be extended into a feasible solution is invalid. 
If there is no pair $(\vec{\Gamma}, \vec{\Delta}) \in \Psi$ such that $(r, \gamma_i) \in \Delta_j$ for all $j \in [k_i]$, 
then there are fewer than $k_i$ edge-disjoint paths between $r$ and $\gamma_i$, and therefore the cell is invalid, so $c[t_{\gamma_i},\Psi] = \infty$. 

\item The bag $\rootn(\tset)$ together with profile $\Psi$ is an invalid cell if there is a pair $(\vec{\Gamma}, \vec{\Delta})\in \Psi$ such that $\Gamma_j \neq \Delta_j$ for some $j$. In this case, we set $c[\rootn(\tset), \Psi] = \infty$.
\end{itemize}  

\begin{lemma}
\label[lemma]{lem:ecew:dpsize}
For every $t \in \tset$, there are at most $\exp(w^{O(wk)})$ many valid cells $(t, \Psi)$.
\end{lemma}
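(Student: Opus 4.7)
The plan is to bound the number of profiles $\Psi$ at node $t$ by exploiting the partition structure implicit in each connection set. Recall from \Cref{sec:prelim} that a connection set records precisely the pairs of vertices joined by a path; hence, without loss of generality, each coordinate $\Gamma_i$ (and similarly $\Delta_i$) of a connection profile in $\xset_t$ is closed under reflexivity, symmetry, and transitivity, and therefore encodes an equivalence relation on (a subset of) $X_t$. Under this natural restriction, each $\Gamma_i$ corresponds to a partition of $X_t$, and the number of such partitions of a set of size $|X_t| = O(w)$ is bounded by the Bell number $B(|X_t|) \leq |X_t|^{|X_t|} = w^{O(w)}$.

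Next, I would aggregate across the $k$ coordinates of a connection profile $\vec{\Gamma} = (\Gamma_1, \ldots, \Gamma_k)$: the number of such $k$-tuples is at most $(w^{O(w)})^k = w^{O(wk)}$, and the same bound holds for $\vec{\Delta}$. Consequently,
\[
|\xset_t \times \xset_t| \;\leq\; (w^{O(wk)})^2 \;=\; w^{O(wk)}.
\]
Since a profile $\Psi$ is an arbitrary subset of $\xset_t \times \xset_t$, the total number of valid profiles at $t$ is at most
\[
2^{|\xset_t \times \xset_t|} \;\leq\; 2^{w^{O(wk)}} \;=\; \exp\bigl(w^{O(wk)}\bigr),
\]
matching the claimed bound (the factor of $\ln 2$ is absorbed into the $O(\cdot)$).

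The step that will require the most care is the restriction to equivalence-relation-structured connection sets, since the definition of a profile in \Cref{sec:EC-profile} a priori allows any $\Gamma_i, \Delta_i \subseteq X_t \times X_t$. I would justify the restriction in two equivalent ways: (i) semantically, every $\Gamma_i$ (and $\Delta_i$) produced by the DP transitions is the projection of a transitive closure of edges in $G_t$ (as in \Cref{lem:edge:unify} for the single-connectivity case), and is thus automatically an equivalence relation; (ii) by canonicalization, replacing each $\Gamma_i$ and $\Delta_i$ in a pair of $\Psi$ by its reflexive-symmetric-transitive closure does not change the set of partial solutions $F$ consistent with $\Psi$, so it suffices to count transitively-closed representatives. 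This reduction from arbitrary binary relations (of which there are $2^{O(w^2)}$) to partitions (only $w^{O(w)}$) is precisely what yields the sharper double-exponential bound $\exp(w^{O(wk)})$ as opposed to the weaker $\exp(2^{O(w^2 k)})$ that would follow from the crude count.
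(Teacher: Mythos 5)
Your proposal is correct and follows essentially the same route as the paper: the paper likewise observes that although $\Psi \subseteq \xset_t \times \xset_t$ could a priori have up to $2^{kw^2}$ elements per coordinate, the transitive closures in the DP definitions mean it suffices to count equivalence relations on $X_t$ (at most $w^{O(w)}$ of them), giving $w^{O(wk)}$ pairs $(\vec\Gamma,\vec\Delta)$ and hence $\exp(w^{O(wk)})$ subsets $\Psi$. Your explicit justification of the restriction to transitively-closed connection sets (via the semantics of the DP transitions, or by canonicalization) fills in a step the paper states only in one line, but the counting argument is identical.
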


\shortparagraph{DP computation:} For all other cells, we compute their values from the values of their children. 
Let $t$ be a bag with left-child $t'$ and right-child $t''$. 
Let $\Psi, \Psi', \Psi''$ be their profiles respectively, and $Y \subseteq E_t$. 
We say that $\Psi$ is consistent with $(\Psi', \Psi'')$ via $Y$ (abbreviated by $\Psi \cons{Y} (\Psi', \Psi'')$) if the following conditions are satisfied. For each pair $(\vec{\Gamma}, \vec{\Delta}) \in \Psi$, there are $(\vec{\Gamma}' , \vec{\Delta}') \in \Psi'$ and  $(\vec{\Gamma}'' , \vec{\Delta}'') \in \Psi''$, together with a partition of $Y$ into $Y_1 \cup Y_2 \cup \ldots \cup Y_k$ such that, for every $j \in [k]$,
\begin{align*}
\Gamma_j &= \restr{\tc(\Gamma'_j \cup \Gamma''_j \cup Y_j)}_t &
\Delta'_j &= \restr{\tc(\Delta_j \cup \Gamma'_j)}_{t'} &
\Delta''_j &= \restr{\tc(\Delta_j \cup \Gamma''_j)}_{t''}
\end{align*}
Similarly, for any $(\vec{\Gamma'}, \vec{\Delta'}) \in \Psi'$ (resp., $(\vec{\Gamma''}, \vec{\Delta''}) \in \Psi''$)  there are $(\vec{\Gamma''}, \vec{\Delta''}) \in \Psi''$ (resp., $(\vec{\Gamma'}, \vec{\Delta'}) \in \Psi'$) plus $(\vec{\Gamma}, \vec{\Delta}) \in \Psi$ and some partition of $Y$,  satisfying similar conditions as above.

Then the value of $c[t, \Psi]$ can be defined recursively among valid cells: 
\[c[t, \Psi] = \min_{\Psi \cons{Y} (\Psi', \Psi'')} \left( c[t', \Psi'] + c[t'', \Psi''] + c(Y) \right) \]
The final solution can be computed as 
$\min \left\{ c[\rootn(\tset),\Psi] \mid (\forall (\vec{\Gamma}, \vec{\Delta}) \in \Psi)(\forall j \in [k]) \Gamma_j = \Delta_j \right\}$.
The correctness of this DP is deferred to \Cref{sub:ecew:correctness}.

\section{Algorithms for \RestrictedGroupSNDP} %
\label{sec:rgsndp}
We now have sufficient technical tools to prove \Cref{thm:main:rgsndp}. In the first step, we turn the DP table into a tree instance of a variant of \gst and, in the second step, apply randomized rounding to obtain a polylogarithmic approximation to the problem.

\newcommand{\fiber}[0]{\ensuremath{\operatorname{copies}}}

\shortparagraph{Tree Instance: } %
We start by showing how to
 transform the DP table into a tree $\widetilde \tset$, where we can solve a
variant of the group Steiner tree problem. %
The following theorem formalizes this transformation, and we dedicate the rest
of this section to proving the theorem.
For convenience, we add a dummy bag $t_S$ with $X_{t_S} = \emptyset$ as the parent
of the root bag. %

\begin{theorem}
\label{thm:rgsndp:tree_instance}

Given a graph $G$ rooted at $r$ with treewidth $w$ and groups $S_i \subseteq V$, there is a
tree $\widetilde \tset$ with groups $\widetilde S_i$ and a set of accepted solutions $\widetilde X$ such that:
\begin{inparaenum}[(i)]
     \item the size of $\widetilde \tset$ is $n^{w^{O(wk)}}$;
     \item for every $F \subseteq E(G)$, there is $X \in \widetilde X$ (and
     vice-versa) such that $c(F) = c(X)$ and, for every $i \in [h]$, $F$
     $k_i$-connects $r$ to $v \in S_i$ iff $X$ connects $\rootn(\widetilde T)$
     to $\ttil \in \widetilde S_i$.
\end{inparaenum}
\end{theorem}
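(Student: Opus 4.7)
The plan is to unfold the DP of \Cref{sec:ec-sndp} --- dropping the restriction that terminal-containing cells must already witness connectivity, since for groups the representative is part of the solution --- into a rooted AND/OR tree $\widetilde \tset$ and then reinterpret solutions of this tree as subgraphs solving a group Steiner tree variant. Each valid cell $(t, \Psi)$ of the DP becomes an \emph{OR} node, each valid transition $\Psi \cons{Y} (\Psi', \Psi'')$ becomes an \emph{AND} node, and because the same cell may be produced by different parent choices, cells get unfolded into many copies (one per root-to-node path).

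First I would build $\widetilde \tset$ top-down. Starting at a dummy root representing the cell at the added parent bag $t_S$, I would recursively attach, to every reached cell node $v = (t, \Psi)$, one transition-node child per valid triple $(Y, \Psi', \Psi'')$ with $\Psi \cons{Y} (\Psi', \Psi'')$, placing cost $c(Y)$ on the connecting edge; each such transition has as its zero-cost children the two cell nodes $(t', \Psi')$ and $(t'', \Psi'')$, on which we recurse. The accepted set $\widetilde X$ consists of subgraphs $X$ such that, for every included cell node, exactly one transition-node child is included together with both of its cell-node children --- precisely encoding the OR/AND semantics. For each $i$ and each $v \in S_i$, the tree group $\widetilde S_i$ collects every copy of every cell $(t_v, \Psi)$ whose profile contains a pair $(\vec{\Gamma}, \vec{\Delta}) \in \Psi$ with $(r, v) \in \Delta_j$ for all $j \in [k_i]$, so that reaching $\widetilde S_i$ from $\rootn(\widetilde \tset)$ is equivalent to selecting cells certifying the $k_i$-disjoint connection from $r$ to some member of $S_i$.

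The cost-preserving correspondence is then established using the vector-valued extension of \Cref{lem:edge:unify}. For the forward direction, given a feasible $F \subseteq E(G)$, I set $Y_t := F \cap E_t$ and read off the profile $\Psi_t$ at each bag from the global definition; the equivalence lemma guarantees $\Psi_t \cons{Y_t} (\Psi_{t'}, \Psi_{t''})$ at every non-leaf bag, so walking through $\widetilde \tset$ and selecting precisely these transitions produces $X \in \widetilde X$ with $c(X) = \sum_t c(Y_t) = c(F)$; at each bag $t_{v_i}$ containing a $k_i$-connected representative, $\vec{\Delta}$ certifies the required $k_i$ paths and hence places a selected copy in $\widetilde S_i$. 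For the backward direction, given $X \in \widetilde X$ reaching each $\widetilde S_i$, the union $F := \bigcup_t Y_t$ of the edge sets labeling its selected transitions has the same cost, and since the selected cells obey the local connectivity conditions, the equivalence lemma implies the required $k_i$ edge-disjoint paths are present in $F$.

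The size bound then combines \Cref{lem:treewidth:props}, which gives height $O(\log n)$ for $\tset$, with \Cref{lem:ecew:dpsize}, which bounds the number of valid profiles per bag by $\exp(w^{O(wk)})$. Each cell has at most $\exp(w^{O(wk)})$ transition children, so the unfolded tree has size at most $\exp(w^{O(wk)})^{O(\log n)} = n^{w^{O(wk)}}$. The main obstacle I anticipate is this shift in where demands are enforced: unlike \RootedECSNDP, cells cannot be invalidated based on whether a specific vertex is reached (any member of the group may serve as representative), so all connectivity constraints must migrate from cell-validity into the group structure $\widetilde S_i$ --- and the forward map must then correctly identify which cell copies act as representatives, particularly when several groups share overlapping topmost bags.
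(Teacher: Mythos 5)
Your proposal is correct and follows essentially the same route as the paper: unfold the DP cells and consistent transitions of the \ecsndp{} DP into a tree of cell nodes and connecting (AND) nodes with edge costs $c(Y)$, move the demand constraints from cell validity into the group definition $\widetilde S_i$ via the $\vec\Delta$ components, and prove the two directions of the correspondence by invoking the global$\Leftrightarrow$local equivalence exactly as in the completeness/soundness lemmas for the DP, with the same degree-times-height size bound. The only differences are presentational (AND/OR terminology, top-down rather than bottom-up duplication into a tree), not substantive.
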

For each cell of the DP table introduced in \Cref{sub:ecew:dp}, we
create a node in $\widetilde \tset$. Namely, we create a vertex $\ttil[t, \Psi]$
for every bag $t \in V(\tset)$ and $\Psi \subseteq \xset_t \times
\xset_t$. %
The root of the tree is $\ttil[t_S, \{(\emptyset^{k},\emptyset^{k})\}]$ (this is
the only connection profile for $t_S$). %
For a bag $t \in \tset$ with children $t'$, $t''$, %
we add connecting nodes $\ttil_c[t,\Psi, \Psi', \Psi'', Y]$ connected to
the nodes $\ttil[t, \Psi]$, $\ttil[t', \Psi']$, $\ttil[t'', \Psi'']$, for every $Y
\subseteq E_t$, if $\Psi \cons{Y} (\Psi', \Psi'')$. %
If there is only one child, the connecting node has degree $2$, and we
consider that $\Psi'' = \{(\emptyset^{k},\emptyset^{k})\}$ for the purpose of
describing the algorithm. %
An edge from $\ttil[t, \Psi]$ to $\ttil_c[t,\Psi, \Psi', \Psi'', Y]$ is labeled with
the set of edges $Y$ and is assigned cost $c(Y)$. All other edges in the
instance have cost $0$. 

Notice that, at this point, $\widetilde \tset$ is not a tree, but we can turn
it into one by making copies of the nodes as required. Specifically, we
process the tree in a bottom-up fashion: for each node $\ttil$, we make the 
same number of copies of $\ttil$ and its descendants as there are incoming
edges of $\ttil$ such that each edge is incident to a different copy. 
In this manner, all the copies of $\ttil$ are now the roots of
subtrees, which are disjoint. %

For convenience, we denote by $\ttil[t, \Psi]$ (resp., $\ttil_c[t,\Psi, \Psi',
\Psi'', Y]$) any copy of the original node; when we need to
distinguish copies, we denote by $\fiber(\ttil)$ the set of all
copies of a node $\ttil$. %

The final step in our construction is to prune the tree by removing nodes
that cannot be reached or that represent choices that cannot be part of a
feasible solution. To do that, it is sufficient to apply the following rules
to exhaustion:
\begin{inparaenum}[(i)]
     \item remove $\ttil \in \widetilde \tset$ if it is not connected to the root;
     \item remove a connecting node if one of its children was removed;
     \item remove $\ttil[t, \Psi]$ if it is a leaf node but  $\Psi \not\subseteq \{(\emptyset, \ldots, \emptyset)\} \times \xset_t$  (i.e.~$\vec\Gamma \neq (\emptyset, \ldots, \emptyset)$ for some $(\vec\Gamma, \vec\Delta) \in\Psi$).
\end{inparaenum}

We can now restate the goal of the problem in terms of $\widetilde \tset$: we
want to find nodes $\ttil[t, \Psi_t]$ and edge sets $Y_t \subseteq E_t$ for
every bag $t \in V(\tset)$, such that $(\Psi_{t'}, \Psi_{t''}) \cons{Y}
 \Psi_t$ for all non-leaf bags $t$ with children $t'$, $t''$. %
Further, for every group $S_i$, there must be a vertex
$\gamma_i \in S_i$ and a partition of $Y := \bigcup_{t \in
V(\tset)} Y_t$ into $k_i$ sets, such that each contains a path from $r$ to $\gamma_i$. %

The set of nodes $\{\ttil[t, \Psi_t]\}_{t \in V(\tset)}$ with the
respective connecting nodes $\ttil[t, \Psi_t, \Psi_{t'}, \Psi_{t''}, Y_t]$
(for all non-leaf bags $t \in V(\tset)$) induces a tree $\widetilde T$ in
$\widetilde \tset$. %
We say that such a tree $\widetilde T$ is \emph{valid} if every node $\ttil[t, \Psi]
\in V(\widetilde T)$ has exactly one child in the graph (or none if $t$ is a
leaf), and every connecting node $\ttil_c[t,\Psi, \Psi', \Psi'', Y_t]$ in
$\widetilde T$ has full-degree, i.e., all its neighbors are in the solution as
well. %

For every group $S_i$, we define $\widetilde S_i$ as follows: %
for every $v \in S_i$ and every $\Psi \in \xset_t \times \xset_t$, every element of $\fiber(\ttil[t_v,
\Psi])$ is in $\widetilde S_i$ if there is $(\vec{\Gamma}, \vec{\Delta}) \in \Psi$ such that
$(r,v) \in \Delta_j$ for all $j \in [k_i]$.

The size of the instance follows by considering its height and maximum degree: %
the maximum degree of $\widetilde\tset$ is $\exp(w^{O(wk)})$ by \Cref{lem:ecew:dpsize}, 
and there is a tree decomposition of height
$O(\log n)$ by \Cref{lem:treewidth:props}, which implies that
$\operatorname{height}(\widetilde\tset) = O(\log n)$. %
We conclude that $|\widetilde \tset| = n^{w^{O(wk)}}$.

The correctness of the reduction follows from the correctness of the DP for {\sf Rooted} \ecsndp, and its proof is left to \Cref{sub:ecew:correctness}.

\shortparagraph{Algorithm: } %
We now show how to obtain a valid tree $\widetilde T$, given $\widetilde \tset$. %
Let $T^*$ be the min-cost valid tree in $\widetilde \tset$ that connects all
the groups $\{\widetilde S_i\}_{i \in [h]}$. %
Chalermsook et al.~\cite{ChalermsookDLV17} showed that it is possible to find a valid tree $\widetilde T$ with expected cost $c(T^*)$,
but whose probability of covering a group is just $O(1/\operatorname{height}(\widetilde \tset))$. %

Using this result, we can obtain valid trees $\widetilde T_1$, \ldots, $\widetilde T_\ell$, where $\ell = O(\log n \log h)$. %
We can then obtain solutions $F_j$ that $k$-connect the same groups and have
the same cost as $\widetilde T_j$, for all $j \in [\ell]$, and finally output the solution %
$F := \bigcup_{j \in [\ell]} F_j$. %
Since the expected cost of each $\tilde T_i$ is $c(T^*)$, the expected cost of $F$ is $O(\log n \log h) c(T^*)$.

By sampling $c\log n \log h$ independent valid trees, for large enough $c$, we
ensure that all the groups are covered with high probability (by union bound). %
We conclude that the algorithm outputs a randomized $O(\log n
\log h)$-approximation to the problem, with high probability.

\newcommand{\acksandrefs}[0]{
   \subparagraph*{Acknowledgements.}
   Part of this work was done while Parinya Chalermsook, Bundit Laekhanukit and Daniel Vaz were visiting the Simons Institute for the Theory of Computing. It was partially supported by the DIMACS/Simons Collaboration on Bridging Continuous and Discrete Optimization through NSF grant \#CCF-1740425. 
   Parinya Chalermsook is currently supported by European Research Council (ERC) under the European Union’s Horizon 2020 research and innovation programme (grant agreement No 759557) and by Academy of Finland Research Fellows, under grant number 310415

   \bibliography{main-arxiv}

   \appendix
}
\newcommand{\acksandrefsFull}[0]{
\acksandrefs{}
}
\newcommand{\acksandrefsShort}[0]{
}

\acksandrefsShort{}

\section{Details of the Global \texorpdfstring{$\Leftrightarrow$}{^^e2^^87^^94} Local Checking for DP} %
\label{appendix:proofs:unify}

In this section, we will prove a generalized version of \Cref{lem:edge:unify},
that works for edge-connectivity and vertex-connectivity, both with edge and
vertex costs. %
In vertex-connectivity problems, we are interested in finding internally disjoint
paths. In order to handle this setting, we introduce a modified version of
transitive closure. %

For a set of vertices $Z$ and a set of edges $S$, we denote by $\tc^*_Z(S)$
the set of all pairs $(u,v)$ such that there is a $u$-$v$-path in the graph
$(Z\cup\{u,v\}, S)$, that is, a path whose internal vertices are in $Z$, and
whose edges are in $S$. Formally,
\[
\tc^*_Z(S) = \bigl\{(u,v) \bigm\vert \exists w_1, \ldots, w_\ell \in Z, \forall i \in [\ell-1], (u,w_1), (w_\ell, v), (w_i, w_{i+1}) \in S \big\}
\]
We then keep track, for every bag, of which vertices are allowed to be used in
the solution, that is, we use triples $(Z, \Gamma^*, \Delta^*)$, instead of
the previously used pairs $(\Gamma, \Delta)$. %

Let $W_t \subseteq X_t \setminus X_{p(t)}$ for every $t \in V(\tset)$, $W = \bigcup_{t \in V(\tset)} W_t$, %
$Y_t \subseteq E_t$ for every $t \in V(\tset)$, $Y = \bigcup_{t \in V(\tset)} Y_t$, %
and a triple $(Z_t, \Gamma^*_t, \Delta^*_t)$ for every $t \in V(\tset)$. %
For the purposes of this section, we introduce the following definitions for
local and global connectivity.

We say that the triples $(Z_t,\Gamma^*_t, \Delta^*_t)$ satisfy the \emph{local (resp. global)
connectivity definition} if, for every bag $t \in V(\tset)$,

\noindent
\begin{minipage}[t]{.5\textwidth}
   \vspace{-2em}
   \begin{align*}
      \text{\bf Local}& \\
      Z_t      &:=
      \begin{cases}
         W_t                                    & \text{if } t = \rootn(\tset) \\
         \left(Z_{p(t)} \cup W_t\right) \cap V_t   ~~~~~~& \text{otherwise}
      \end{cases} \\
      \Gamma_t^* &:= 
      \begin{cases}
         \emptyset                                                   & \text{if $t$ is a leaf bag $t$} \\
         \restr{\tc^*_{Z_t}\left(\Gamma^*_{t'}\cup\Gamma^*_{t''}\cup Y_t\right)}_t  & \text{otherwise} 
      \end{cases} \\
      \Delta_t^* &:= 
      \begin{cases}
         \Gamma^*_t                                             & \text{if } t = \rootn(\tset) \\
         \restr{\tc^*_{Z_t}\left(\Delta^*_{p(t)}\cup\Gamma^*_t\right)}_t ~~~~& \text{otherwise} 
      \end{cases} 
   \end{align*}
\end{minipage}%
\begin{minipage}[t]{.5\textwidth}
   \vspace{-2em}
   \begin{align*}
      \text{\bf Global}& \\
      Z_t      &:= W \cap X_t \\
      \Gamma^*_t &:= \restr{\tc^*_W\left(Y \cap E_{\tset_t} \right)}_t \\
      \Delta^*_t &:= \restr{\tc^*_W\left(Y\right)}_t
   \end{align*}
\end{minipage}

\vspace{0.5em}

We then prove the following lemma, proving that the given local and global connectivity definitions are equivalent.
\begin{lemma}
\label[lemma]{lem:mixed:unify}
Let $W_t \subseteq X_t \setminus X_{p(t)}$ be a subset of vertices, $Y_t
\subseteq E_t$ a subset of edges and $(Z, \Gamma^*_t, \Delta^*_t)$ a triple of
profiles for every $t \in V(\tset)$. %

Then, the triples $(Z_t, \Gamma^*_t, \Delta^*_t)$ satisfy the local
connectivity definition iff they satisfy the global connectivity definition.
\end{lemma}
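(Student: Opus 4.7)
The plan is to exploit the fact that once $W$ and $Y$ are fixed, both the local and global rules uniquely determine the triples $(Z_t, \Gamma^*_t, \Delta^*_t)$: the local rules do so by structural recursion on $\tset$, while the global rules do so directly. Hence it suffices to begin with the triples produced by the local rules and verify that each coordinate agrees with the closed-form value prescribed globally. I would carry this out by three nested inductions over $\tset$, one per coordinate, using the standard separator property of tree decompositions as the main combinatorial tool.

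First, I would induct from the root down to the leaves to show $Z_t = W \cap X_t$. At the root, $Z_{\rootn(\tset)} = W_{\rootn(\tset)}$ equals $W \cap X_{\rootn(\tset)}$ since $W_s \subseteq X_s \setminus X_{p(s)}$ for every strict descendant $s$. For the inductive step, I would expand $Z_t = (Z_{p(t)} \cup W_t) \cap X_t$ and apply the hypothesis to $Z_{p(t)}$. The only subtle point is showing that any $v \in W \cap X_t$ with $v \notin W_t$ lies in $X_{p(t)}$: its unique topmost bag is then a strict ancestor of $t$, so by connectivity of the bags containing $v$, the bag $X_{p(t)}$ must contain $v$ as well.

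Next, I would use a bottom-up induction to prove $\Gamma^*_t = \restr{\tc^*_W(Y \cap E_{\tset_t})}_t$. The leaf case uses property~(iii) of \Cref{lem:treewidth:props}, $E_t = \emptyset$. For the step, the inclusion $\subseteq$ is immediate by unfolding: the children's hypotheses give concrete paths in the subgraphs $G_{t'}$, $G_{t''}$ with internal vertices in $W$, and any closure via $\tc^*_{Z_t}(\,\cdot\, \cup Y_t)$ yields a genuine path in $G_t$ with internal vertices in $W$. The reverse inclusion is the technical heart: any $(u,v)$-path $P \subseteq Y \cap E_{\tset_t}$ with internal vertices in $W$ and endpoints in $X_t$ is cut at its intersections with $X_t$ into maximal subpaths, each lying entirely in $E_{\tset_{t'}}$, $E_{\tset_{t''}}$, or $Y_t$, since the tree-decomposition separator property guarantees that once a subpath leaves $X_t$ into $V(G_{t'}) \setminus X_t$ it cannot re-enter $V(G_{t''}) \cup (V(G) \setminus V(G_t))$ without passing through $X_t$ again. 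Each such subpath is then a pair already in $\Gamma^*_{t'}$, $\Gamma^*_{t''}$, or $Y_t$ by induction, and gluing them at intermediate vertices in $W \cap X_t = Z_t$ reconstructs $(u,v) \in \Gamma^*_t$.

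Finally, a top-down induction establishes $\Delta^*_t = \restr{\tc^*_W(Y)}_t$. At the root, $E_{\tset_{\rootn(\tset)}} = E(G)$, so $\Delta^*_{\rootn(\tset)} = \Gamma^*_{\rootn(\tset)}$, which we already know equals $\restr{\tc^*_W(Y)}_{\rootn(\tset)}$. For the step, I would expand $\Delta^*_t = \restr{\tc^*_{Z_t}(\Delta^*_{p(t)} \cup \Gamma^*_t)}_t$ and use the just-established identity for $\Gamma^*_t$ together with the inductive hypothesis at $p(t)$. The crux is again a separator-based decomposition: a path in $(W \cup \{u,v\}, Y)$ with $u,v \in X_t$ is split at $X_t$ into maximal segments, each either entirely inside $G_t$ (captured by $\Gamma^*_t$) or entirely outside $V(G_t) \setminus X_t$; for any outside segment, an endpoint $x \in X_t$ also appears in a bag containing a strictly outside neighbour, hence in a bag outside $\tset_t$, so by connectivity of bags $x \in X_{p(t)}$, i.e.\ $x \in X_t \cap X_{p(t)}$, and the segment is captured by $\Delta^*_{p(t)}$. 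I expect the main obstacle to be formalizing this separator-based decomposition cleanly — in particular, ensuring that the unified operator $\tc^*_W(\cdot)$ correctly tracks internal vertices through both $G_t$-internal and $G_t$-external segments, so that one proof uniformly covers edge- and vertex-connectivity with either cost model.
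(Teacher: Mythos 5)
Your proposal is correct and follows essentially the same route as the paper: a top-down induction for $Z_t$ (with the same observation that a vertex of $W\cap X_t$ not in $W_t$ must lie in $X_{p(t)}$ by connectivity of its bags), a bottom-up induction for $\Gamma^*_t$ and a top-down one for $\Delta^*_t$, where the easy inclusions follow from monotonicity of $\tc^*$ and the reverse inclusions decompose a witnessing path at its intersections with $X_t$ into segments confined to $\tset_{t'}$, $\tset_{t''}$, $Y_t$, or the exterior of $\tset_t$ — exactly the paper's Path Lemma argument. No gaps.
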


Before proving the lemma, we show how \Cref{lem:edge:unify} follows. %
We will prove that, if we fix $W_t = X_t \setminus X_{p(t)}$ and
$Z_t = X_t$, the definitions of \Cref{lem:mixed:unify,lem:edge:unify} are equivalent. %

For this, it is sufficient to see that $\tc^*_W(S) = \tc^*_{V(G)}(S) = \tc(S)$, and
that the common vertices in $\Gamma^*_{t'}$, $\Gamma^*_{t''}$, and $Y_t$ are
all in $X_t$, thus
\[
\restr{\tc^*_{Z_t}\left(\Gamma^*_{t'}\cup\Gamma^*_{t''}\cup Y_t\right)}_t 
= \restr{\tc\left(\Gamma^*_{t'}\cup\Gamma^*_{t''}\cup Y_t\right)}_t
\]
Similarly, since $\Delta^*_{p(t)}$ and $\Gamma^*_t$ only intersect inside $X_t \times X_t$,
\[
\restr{\tc^*_{Z_t}\left(\Delta^*_{p(t)}\cup\Gamma^*_t\right)}_t
=\restr{\tc\left(\Delta^*_{p(t)}\cup\Gamma^*_t\right)}_t
\]
We conclude that when $Z_t = X_t$, $\Gamma^*_t = \Gamma_t$ and $\Delta^*_t =
\Delta_t$, the proof follows.

The following technical lemma will be useful when proving \Cref{lem:mixed:unify}.
\begin{lemma}[Path Lemma]
\label[lemma]{lemma:monochromatic}
Let $G$ be any graph and $\tset$ be a tree decomposition of $G$. %
Let $t \in V(\tset)$ be a bag and $P$ be a path of length at least $2$ whose
endpoints $x,y$ are the only vertices of $P$ in $t$, that is, $V(P) \cap X_t \subseteq
\{x,y\}$.

Then there is a connected (subtree) component  $\tset'$ in $\tset\setminus t$
such that, for any edge $ab\in E(P)$, $\tset'$ has a bag $t'$ that contains
$ab$, i.e., every edge $ab\in E_{t'}$ for some bag $t' \in V(\tset')$.
\end{lemma}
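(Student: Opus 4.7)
The plan is to identify a single component $\tset'$ of $\tset\setminus t$ that simultaneously hosts a bag realizing every edge of $P$. I would write the path as $P=(x=u_0,u_1,\ldots,u_m=y)$ with $m\geq 2$, and observe that by hypothesis each internal vertex $u_1,\ldots,u_{m-1}$ lies outside $X_t$. For each such $u_i$, the third axiom of tree decompositions guarantees that the set of bags containing $u_i$ forms a connected subtree of $\tset$; since $u_i\notin X_t$, this subtree avoids the node $t$ altogether, and hence is contained in a single connected component $\tset'_i$ of $\tset\setminus t$.

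Next I would show that all $\tset'_i$ coincide. For each consecutive pair $u_i,u_{i+1}$ with $1\leq i\leq m-2$, both are internal vertices adjacent in $G$, so by the second axiom some bag $t_{i,i+1}$ contains both of them. This bag lies in $\tset'_i\cap\tset'_{i+1}$, forcing $\tset'_i=\tset'_{i+1}$. By induction this yields a single component $\tset':=\tset'_1=\cdots=\tset'_{m-1}$ harboring every internal vertex of $P$.

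It then remains to place each edge of $P$ inside some bag of $\tset'$. Any edge $u_iu_{i+1}$ with $1\leq i\leq m-2$ is witnessed by $t_{i,i+1}\in\tset'$ from the previous step. For the two boundary edges $xu_1$ and $u_{m-1}y$, there is a bag containing both endpoints; this bag necessarily contains the internal endpoint ($u_1$ or $u_{m-1}$), hence belongs to the subtree of bags containing that internal vertex, which sits inside $\tset'$. This completes the proof.

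The argument is largely structural and should not encounter a deep obstacle; the only subtlety I would check explicitly is the edge case $m=2$, where there is a single internal vertex $u_1$ and no consecutive pair of internal vertices exists. In that situation the component $\tset'$ is just $\tset'_1$, and both edges $xu_1$ and $u_1y$ are handled directly by the boundary-edge argument above, so the proof goes through uniformly.
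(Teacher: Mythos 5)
Your proof is correct and rests on the same key fact as the paper's: an internal vertex of $P$ lies outside $X_t$, so the connected subtree of bags containing it cannot be split by removing $t$, which forces consecutive edges of $P$ into the same component of $\tset\setminus t$. The paper phrases this as a two-line proof by contradiction (if two consecutive edges $ab,bc$ landed in different components, the bags containing $b$ would be disconnected in $\tset\setminus t$, forcing $b\in X_t$); your direct, constructive version is just the contrapositive of that argument, spelled out with the boundary edges treated explicitly.
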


\begin{proof}
We provide a simple proof by contradiction. Assume that there are two
consecutive edges, $ab, bc \in E(P)$ that are in different connected
components of $\tset\setminus t$ (otherwise, all edges must be in the same
component). %
Since the set of bags containing $b$ must be connected in $\tset$ but is not
connected in $\tset\setminus \{t\}$, $b \in X_t$, we reach a contradiction.
\end{proof}

\begin{proof}[Proof of \Cref{lem:mixed:unify}]

We remark that the function $\tc^*$ shares some properties with the usual
definition of transitive closure, which are used throughout the proof:
\begin{observation}
The function $\tc^*$ satisfies the following properties:
\begin{itemize}
   \item $\tc^*_Z(\tc^*_Z(Y)) = \tc^*_Z(Y)$
   \item $\tc^*_{Z'}(Y') \subseteq \tc^*_Z(Y)$ if $Z'\subseteq Z$, $Y' \subseteq Y$
\end{itemize}
\end{observation}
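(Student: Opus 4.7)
The plan is to prove both properties directly from the path-based (or walk-based) characterization of $\tc^*_Z(Y)$: namely, $(u,v) \in \tc^*_Z(Y)$ iff there is a sequence of internal vertices $w_1, \ldots, w_\ell \in Z$ (with $\ell = 0$ corresponding to the case of a direct edge) such that the edges $(u, w_1), (w_1, w_2), \ldots, (w_{\ell-1}, w_\ell), (w_\ell, v)$ all lie in $Y$. In particular, $Y \subseteq \tc^*_Z(Y)$ by taking $\ell = 0$.

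For the monotonicity property, I would argue as follows. Fix $Z' \subseteq Z$ and $Y' \subseteq Y$, and suppose $(u,v) \in \tc^*_{Z'}(Y')$. By definition, there is a witnessing sequence $w_1, \ldots, w_\ell \in Z'$ whose induced edges lie in $Y'$. Because $Z' \subseteq Z$ and $Y' \subseteq Y$, the very same sequence certifies $(u,v) \in \tc^*_Z(Y)$. Thus $\tc^*_{Z'}(Y') \subseteq \tc^*_Z(Y)$.

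For idempotence, one direction is immediate: by monotonicity and $Y \subseteq \tc^*_Z(Y)$, we get $\tc^*_Z(Y) \subseteq \tc^*_Z(\tc^*_Z(Y))$. For the reverse inclusion, take any $(u,v) \in \tc^*_Z(\tc^*_Z(Y))$; then there is a sequence $w_1, \ldots, w_\ell \in Z$ such that each edge of the outer walk $u \to w_1 \to \cdots \to w_\ell \to v$ belongs to $\tc^*_Z(Y)$. Each such ``edge'' $(a,b) \in \tc^*_Z(Y)$ is itself witnessed by an inner walk from $a$ to $b$ with edges in $Y$ and internal vertices in $Z$. Concatenating these inner walks along the outer walk yields a single walk from $u$ to $v$ whose edges all lie in $Y$; its internal vertices consist of the $w_i$'s together with the internal vertices of the inner walks, all of which lie in $Z$. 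Hence $(u,v) \in \tc^*_Z(Y)$, establishing the reverse inclusion.

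The argument is essentially bookkeeping about path concatenation, so no real obstacle is expected. The only minor subtlety is uniformly handling the degenerate $\ell = 0$ case (a direct edge) so that the inclusion $Y \subseteq \tc^*_Z(Y)$ is justified, and so that the concatenation argument does not create spurious vertices when some inner walk is a single edge. Both are handled simply by adopting the convention that an ``internal'' sequence of length zero corresponds to membership of the pair in $Y$ itself.
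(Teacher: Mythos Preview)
Your proof is correct. The paper itself states this observation without proof, treating both properties as immediate from the path-based description of $\tc^*_Z$ (``the set of all pairs $(u,v)$ such that there is a $u$-$v$-path in the graph $(Z\cup\{u,v\}, S)$''); your argument is exactly the natural unfolding of that description, and there is nothing further to compare.
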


\paragraph*{Equivalence for $Z_t$:} 

We prove that the two definitions for $Z$ are equivalent by induction on the
depth of the bag. At the root, we have that $W_{\rootn(\tset)} = W \cap X_{\rootn(\tset)}$, so the
equivalence holds. 

For the induction step, let $t \in V(\tset)$ be a bag other than the root. Then
\begin{align*}
   \left(Z_{p(t)} \cup W_t\right) \cap X_t &=         \left(Z_{p(t)} \cap X_t\right) \cup \left(W_t \cap X_t\right) \\ 
                                 &\subseteq \left(W \cap X_t\right) \cup \left(W \cap X_t\right) \\
                                 &=         W \cap X_t
\end{align*}
The second step follows from the induction hypothesis, as well as the definition of $W$.
We now prove the converse inclusion.
\begin{align*}
   W \cap X_t &=         \left(W \cap X_{p(t)} \cap X_t\right) \cup \left(W \cap \big(X_t \setminus X_{p(t)}\big)\right) \\
            &\subseteq \left(Z_{p(t)} \cap X_t\right) \cup W_t \\
            &=         \left(Z_{p(t)} \cup W_t\right) \cap X_t
\end{align*}
We use the induction hypothesis, as well as the fact that $W_t \subseteq X_t$.

\paragraph*{Equivalence for $\Gamma^*_t$:}

We prove the statement by induction on the height of a bag $t$. 
Since $E_{\tset_t} = \emptyset$, both definitions are equivalent for every leaf $t$.

Let $t$ be any bag. By the induction hypothesis, %
$\Gamma^*_{t'}, \Gamma^*_{t''} \subseteq \tc^*_W\left(Y \cap E_{\tset_t} \right)$. Therefore, 
\[
\restr{\tc^*_{Z_t}\left(\Gamma^*_{t'}\cup\Gamma^*_{t''}\cup Y_t\right)}_t \subseteq
\restr{\tc^*_W\left(Y \cap E_{\tset_t} \right)}_t
\]
Here, we use that $Z_t = W \cap X_t \subseteq W$.

To prove the converse inclusion, let $(u,v) \in \restr{\tc^*_W\left(Y \cap E_{\tset_t} \right)}_t$. 
By definition, there must be a path $p$ between $u$
and $v$ using internal vertices in $W$. Let $u=w_0, w_1, \ldots, w_\ell=v$ be all the
vertices of $p$ (in the correct order) that are also in $X_t$.

Each pair $(w_i, w_{i+1})$ is connected by a subpath of $p$. By \Cref{lemma:monochromatic},  $(w_i, w_{i+1})$ is either an edge in $Y_t$, or the
subpath is fully contained in $Y \cap E_{\tset_{t'}}$ or $Y \cap E_{\tset_{t''}}$, and uses internal vertices in $W$. In the
first case, $(w_i, w_{i+1}) \in Y_t$, while in the remaining
cases, $(w_i, w_{i+1})$ is in $\Gamma_{t'}$ or $\Gamma_{t''}$, by the induction
hypothesis.
We conclude that, since $w_i \in Z_t = W \cap X_t$, for $i \in [l-1]$, then $(u,v)$ is in 
\[\restr{\tc^*_{Z_t}\left(\Gamma^*_{t'}\cup\Gamma^*_{t''}\cup Y_t\right)}_t\]

\paragraph*{Equivalence for $\Delta^*_t$:}

We now prove that both definitions for $\Delta^*$ are equivalent, by using induction on the depth of the bags.
For the node $\rootn(\tset)$, we have $E_{\tset_{\rootn(\tset)}} = E(G)$, and thus the base case follows.

For the induction step, we remark that $\Delta^*_{p(t)}, \Gamma^*_t
\subseteq \tc^*_W\left(Y \right)$ by the induction hypothesis together with the
statement of the lemma for $\Gamma^*$. Therefore,
\[
\restr{\tc^*_{Z_t}\left(\Delta^*_{p(t)}\cup\Gamma^*_t\right)}_t \subseteq
\restr{\tc^*_W\left(Y \right)}_t
\]

For the reverse inclusion, we fix a pair $(u,v) \in \restr{\tc^*_W\left(Y \right)}_t$, 
and a path $p$ that connects $u$ to $v$ in $Y$ using internal vertices in $W$. 
Further, let $u=w_0, w_1, \ldots, w_\ell=v$ be all the
vertices of $p$ (in the correct order) that are also in $X_t$.

By \Cref{lemma:monochromatic}, $(w_i, w_{i+1})$ is either an edge in
$Y(b)$, or the subpath $p'$ of $p$ connecting $w_i$ and $w_{i+1}$ is contained
either in $Y \cap E_{\tset_{t'}}$, $Y \cap E_{\tset_{t''}}$ or $Y \cap (E \setminus E_{\tset_{t}})$.
For all but the last case, $(w_i, w_{i+1}) \in \Gamma^*_t$, by definition. 
In the remaining case, it must be that the vertices
of $p'$ are also contained in $X_{\tset \setminus \tset_b}$. Specifically,
because the bags containing a given vertex must form a connected component of
$\tset$,  $w_i, w_{i+1} \in X_{p(t)}$, which implies $(w_i, w_{i+1}) \in
\restr{\tc^*_W\left(Y \right)}_{p(t)}= \Delta^*_{p(t)}$.

In any case, since $w_i \in Z_t = W \cap X_t$, for $i \in [l-1]$, we conclude 
that every pair $(w_i, w_{i+1})$ and thus $(u,v)$, are contained in 
$\restr{\tc^*_{Z_t}\left(\Delta^*_{p(t)}\cup\Gamma^*_t\right)}_t$.
\end{proof}

\section{Details of the DP for \ecsndp} %
\label{sec:sec4proofs}

\subsection{Correctness}
\label{sub:ecew:correctness}

The following two lemmas imply the correctness of our DP. 

\begin{lemma}
\label[lemma]{lem:ecew:completeness}
Let $F \subseteq E(G)$ be a feasible solution. Then, for every bag $t$, there is a profile $\Psi_t$ for $t$ such that $(t, \Psi_t)$ is valid; for any bag $t$ with children $t'$, $t''$, then $\Psi \cons{Y_t} (\Psi', \Psi'')$, where $Y_t = F \cap E_t$.
Furthermore, $c[\rootn(\tset), \Psi_{\rootn(\tset)}] = c(F)$.
\end{lemma}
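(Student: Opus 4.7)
The plan is to build the profiles $\{\Psi_t\}_{t \in V(\tset)}$ explicitly from $F$ by encoding, for each demand, a $k$-coloring of $F$ that witnesses its $k_i$ edge-disjoint $r$-$\gamma_i$ paths. Since $F$ is feasible, for each $i \in [h]$ I would first fix $k_i$ edge-disjoint $r$-$\gamma_i$ paths $P_{i,1},\dots,P_{i,k_i} \subseteq F$, and define $\chi^{(i)}: F \to [k]$ by $\chi^{(i)}(e) = j$ if $e \in P_{i,j}$ (for $j \in [k_i]$) and $\chi^{(i)}(e) = 1$ on the remaining edges. Let $F^{(i)}_j := (\chi^{(i)})^{-1}(j)$. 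For every bag $t$ and every $i \in [h]$ I would then define the $k$-tuples
\[
\Gamma^{(i)}_{t,j} \;=\; \restr{\tc\bigl(F^{(i)}_j \cap E_{\tset_t}\bigr)}_{t},
\qquad \Delta^{(i)}_{t,j} \;=\; \restr{\tc\bigl(F^{(i)}_j\bigr)}_{t},
\]
and set $\Psi_t = \bigl\{(\vec\Gamma^{(i)}_t, \vec\Delta^{(i)}_t) : i \in [h]\bigr\}$.

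The core step, and the main technical hurdle, is to verify $\Psi_t \cons{Y_t} (\Psi_{t'}, \Psi_{t''})$ with $Y_t = F \cap E_t$ for every non-leaf $t$ with children $t', t''$. I would invoke \Cref{lem:edge:unify} componentwise: fix $i \in [h]$ and $j \in [k]$; the family $\{(\Gamma^{(i)}_{t,j}, \Delta^{(i)}_{t,j})\}_t$ satisfies the \emph{global} connectivity definition by construction, applied to the edge set $F^{(i)}_j$, so the lemma delivers the \emph{local} definition, which is exactly the three identities required for per-coordinate consistency. The partition of $Y_t$ is given by $Y_{t,j} := (F \cap E_t) \cap F^{(i)}_j$. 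Because we use the same index $i$ across all bags, the bidirectional matching required by the consistency definition (every parent pair matches a child pair, and vice versa) is automatic: the pair labeled by $i$ in $\Psi_{t'}$ (or $\Psi_{t''}$) is matched with the pair labeled by $i$ in $\Psi_t$ (and the sibling profile). This side of the argument is essentially bookkeeping once \Cref{lem:edge:unify} is in place.

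It remains to verify validity and to compute the cost. For a leaf $t$, $E_{\tset_t} = \emptyset$ yields $\vec\Gamma^{(i)}_t = \vec\emptyset$ for all $i$, meeting the leaf-validity rule. For each bag $t$ containing a terminal $\gamma_i$, the pair $(\vec\Gamma^{(i)}_t, \vec\Delta^{(i)}_t) \in \Psi_t$ has $(r, \gamma_i) \in \Delta^{(i)}_{t,j}$ for every $j \in [k_i]$, since $P_{i,j} \subseteq F^{(i)}_j$ connects $r$ to $\gamma_i$ in $F$, so the invalidity condition at $t_{\gamma_i}$ does not trigger. At $\rootn(\tset)$, $E_{\tset_{\rootn(\tset)}} = E(G)$ forces $\Gamma^{(i)}_{\rootn(\tset),j} = \Delta^{(i)}_{\rootn(\tset),j}$ for every $i, j$, giving root validity. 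Finally, since $\{E_t\}_{t\in V(\tset)}$ partitions $E(G)$, unrolling the DP recursion along the derivation $(\Psi_t, Y_t = F \cap E_t)$ gives $\sum_t c(Y_t) = c(F)$, whence $c[\rootn(\tset), \Psi_{\rootn(\tset)}] \le c(F)$; the matching lower bound is the content of the companion soundness statement (which shows that any valid DP derivation induces a feasible solution of no greater cost), and the two together yield the claimed equality.
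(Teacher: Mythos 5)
Your proposal is correct and follows essentially the same route as the paper: partition $F$ per demand into $k$ classes witnessing the disjoint paths, define $\Gamma^i_j,\Delta^i_j$ via the global connectivity definition, invoke \Cref{lem:edge:unify} to obtain the local consistency conditions, and check the three validity rules directly. Your explicit treatment of the bidirectional matching (by the shared index $i$) and of the cost equality (upper bound from the constructed derivation, lower bound from soundness) is slightly more careful than the paper's, but the substance is identical.
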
 
\begin{proof}
We first observe that for each demand $(\gamma_i, k_i)$, $i\in [h]$ the solution $F$ can be partitioned into $F = \bigcup_{j\in [k]} F^i_j$ such that the partition $F^i_j$ contains a path connecting $\gamma_i$ to the root $r$. Note that, if $k_i < k$, there might not be any path in the partitions $F_{k_i+1}$, \ldots, $F_k$.  

Let $t$ be a bag. We define $\Psi_t$ as follows. 
We define the pair (omitting the script of $t$ for convenience) $(\vec{\Gamma}^{i}(t), \vec{\Delta}^i(t))$ for all $i\in [h]$ as follows. 
\begin{align*}
   \Gamma^i_j(t)  &= \restr{\tc(F^i_{j} \cap E(\tset_t))}_t \\
   \Delta^i_j(t)  &= \restr{\tc(F^i_{j})}_t
\end{align*}

We now define $\Psi_t = \{(\vec{\Gamma}^i(t),\vec{\Delta}^i(t)) : i\in [h] \}$.

We first show that any cell $(t, \Psi_t)$ is valid. For any leaf bag $t$, $E_t = \emptyset$ and hence $\Gamma^i_j(t) = \emptyset$, $i\in [h]$, $j\in [k]$. For the bag $\rootn(\tset)$, $E(\tset_t) = E$ and hence $\Gamma^i_j(t) = \Delta^i_j(t)$, $i\in [h]$, $j\in [k]$. Finally, since $\pset^i_j$ connects $r$ to $\gamma_i$ for all $i\in [h]$, $j\in [k_i]$, then $(r, \gamma_i)\in \Delta^i_j(t_{\gamma_i})$. We conclude that any cell $(t, \Psi_t)$ defined as above is marked valid. 

Finally we define, for any bag $t$, a set of edges $Y(t) \subseteq E_t$ along with a suitable partition of $Y(t)$. Let $Y(t) = F \cap E_t$ and $Y^i_j(t) = F^i_j\cap E_t$. The subsets $Y^i_j(t)$ indeed form a partition owing to the disjointness of the sets $F^i_j$, for all $j\in [k_i]$ and a fixed $i \in [h]$. 

It is clear from the above definitions of $\Gamma^i_j$ and $\Delta^i_j$ that they satisfy the global connectivity conditions for $Y^i_j$. Applying \Cref{lem:edge:unify}, the local conditions must also be satisfied for the bag $t$ along with its children $t'$, $t''$ and $Y(t)$. This gives us $\Psi \cons{Y(t)} (\Psi', \Psi'')$ and we are done.

Notice that the $c[\rootn(\tset), \Psi_{\rootn(\tset)}] = \sum_{t \in V(\tset)} c(Y(t)) = c(F)$.
\end{proof} 

We remark that the DP uses exactly one cell per bag in any valid solution.

\begin{lemma}
\label[lemma]{lem:ecew:soundness}
Let $\cset = \{(t, \Psi_t) : t\in V(\tset)\}$ be the set of cells selected by the dynamic programming solution. Then there exists a set of edges $F\subseteq E$ such that for each demand $(\gamma_i,k_i)$, $i \in [h]$, there exist $k_i$ edge-disjoint paths connecting $r$ to $\gamma_i$. Furthermore, $c(F) = c[\rootn(\tset), \Psi_{\rootn(\tset)}]$.
\end{lemma}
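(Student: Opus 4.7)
The plan is to take as our solution the natural union $F := \bigcup_{t \in V(\tset)} Y_t$, where $Y_t \subseteq E_t$ is the edge set that the DP transition selects at bag $t$. Because each edge of $G$ lives in a unique $E_t$, the sets $\{Y_t\}_{t \in V(\tset)}$ are pairwise disjoint, so $c(F) = \sum_{t} c(Y_t)$. Unrolling the recursion $c[t, \Psi_t] = c[t', \Psi_{t'}] + c[t'', \Psi_{t''}] + c(Y_t)$ from the root yields the cost equality $c(F) = c[\rootn(\tset), \Psi_{\rootn(\tset)}]$ immediately, so the remaining task is purely feasibility: for every demand $(\gamma_i, k_i)$, exhibit $k_i$ edge-disjoint $r$--$\gamma_i$ paths inside $F$.

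\textbf{Key step 1: extending a seed pair to a global assignment.} Fix a demand $i$. By validity of the cell $(t_{\gamma_i}, \Psi_{t_{\gamma_i}})$, some pair $(\vec{\Gamma}^i, \vec{\Delta}^i) \in \Psi_{t_{\gamma_i}}$ satisfies $(r, \gamma_i) \in \Delta^i_j$ for every $j \in [k_i]$. From this ``seed'' I will build, at every bag $t \in V(\tset)$, a pair $(\vec{\Gamma}^i(t), \vec{\Delta}^i(t)) \in \Psi_t$ together with a partition $Y_t = Y^i_{t,1} \sqcup \cdots \sqcup Y^i_{t,k}$, such that at every non-leaf bag $t$ with children $t', t''$ the chosen pairs and partition witness the coordinate-wise consistency relation $\Psi_t \cons{Y_t} (\Psi_{t'}, \Psi_{t''})$ from \Cref{sub:ecew:dp}. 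This is possible because that rule is stated bidirectionally: each pair in the parent descends to compatible pairs in both children, and each pair in a child lifts to compatible pairs in the parent and the sibling. A single traversal of $\tset$ rooted at $t_{\gamma_i}$ therefore fixes the pair and partition at each newly visited bag using whichever direction of the rule matches the already-fixed neighbor.

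\textbf{Key step 2: decoding the assignment into edge-disjoint paths.} Set $F^i_j := \bigcup_t Y^i_{t,j}$, so that $F = F^i_1 \sqcup \cdots \sqcup F^i_k$. A coordinate-by-coordinate reading of the consistency rule is exactly the local connectivity definition used in \Cref{lem:edge:unify}, applied to the sequences $(\Gamma^i_j(t), \Delta^i_j(t))$ and the edges $Y^i_{t,j}$; the boundary conditions ($\vec{\Gamma} = \emptyset$ at leaves and $\vec{\Gamma} = \vec{\Delta}$ at the root) are automatic from the corresponding DP validity rules. Invoking \Cref{lem:edge:unify} once for each $j \in [k]$ promotes local to global connectivity, so that $\Delta^i_j(t_{\gamma_i}) = \restr{\tc(F^i_j)}_{t_{\gamma_i}}$. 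Since $(r, \gamma_i) \in \Delta^i_j$ for every $j \in [k_i]$, each $F^i_j$ contains an $r$--$\gamma_i$ path; edge-disjointness across $j$ is immediate because the $F^i_j$ partition $F$, yielding the required $k_i$ edge-disjoint paths.

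\textbf{Main obstacle.} The crux of the argument is the extension step. The worry is that choices forced by propagating upward from $t_{\gamma_i}$ could conflict with choices forced by propagating downward into unrelated subtrees, or that the partition of $Y_t$ chosen when entering $t$ from one side might not be revisitable from the other. This is exactly why the consistency relation has been designed symmetrically and why rooting the traversal at $t_{\gamma_i}$ is important: at every newly visited bag only one neighbor is already fixed, and the appropriate direction of the bidirectional rule supplies a matching pair and partition in one shot, with no backtracking needed.
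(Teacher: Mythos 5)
Your proposal is correct and follows essentially the same route as the paper's proof: define $F$ as the union of the selected $Y_t$'s, seed a pair $(\vec{\Gamma},\vec{\Delta})$ at the valid cell for $t_{\gamma_i}$, propagate pairs and partitions to all other bags via the bidirectional consistency relation (downward into subtrees, upward and sideways out of $\tset_{t_{\gamma_i}}$), and then invoke \Cref{lem:edge:unify} coordinate-wise to convert local to global connectivity, with edge-disjointness following from the partition of $Y_t$ at each bag. Your explicit discussion of why the rooted traversal avoids conflicting choices is, if anything, slightly more careful than the paper's terse "a similar recursive definition works for all bags in $\tset\setminus\tset_t$."
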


\begin{proof}
First we define the set $F$. Consider the cells in $\cset$. Since each cell $(t, \Psi_t)$ is picked by the DP, there must exist a set of edges $Y_t \subseteq E_t$ such that for some pair of children cells $\{(t', \Psi_{t'}), (t'', \Psi_{t''})\}\in \cset$, $\Psi_t \cons{Y_t}(\Psi_{t'}, \Psi_{t''})$. Define $F = \bigcup_{(t, \Psi_t)\in \cset} Y_t$. We prove that for any demand $(\gamma_i, k_i)$, $i\in [h]$, there exist edge-disjoint paths $\pset^i_j$, $j\in [k_i]$ in $F$ that connect $r$ to $\gamma_i$.   

For a demand vertex $\gamma_i$, $i\in [h]$, consider the bag $\ts=t_{\gamma_i}$ and the cell $(\ts, \Psi_{\ts}) \in \cset$. Since this cell is valid, there exists a connection profile $(\vec{\Gamma}, \vec{\Delta})\in \Psi_{\ts}$ such that $(r, \gamma_i) \in \Delta_j$ for all $j\in [k_i]$.

We now suitably define connection profiles $(\vec{\Gamma}^t, \vec{\Delta}^t)$ and sets of edges $Y_{t,j}$ for every other bag \mbox{$t\in V(\tset)$}, and prove that these elements satisfy the local connectivity property of \Cref{lem:edge:unify}. We explicitly describe the definition for the children bags $t'$ and $t''$ of $\ts$. The definition for every other bag in the subtree $\tset_t$ can be carried out in a similar recursive fashion.

By definition of consistent DP cells,  there exists  $(\vec{\Gamma}', \vec{\Delta}')\in \Psi_{t'}$, $(\vec{\Gamma}'', \vec{\Delta}'')\in \Psi_{t''}$ and a partition $Y_{\ts} = \bigcup_{j\in [k]} Y_{\ts,j}$ such that, for all $j \in [h]$, the triplet $(\Gamma_j, \Delta_j), (\Gamma'_j, \Delta'_j), (\Gamma''_j, \Delta''_j)$ satisfies the local connectivity conditions for $Y_{\ts,j}$. 
We take $(\vec{\Gamma}^{t'}, \vec{\Delta}^{t'}) = (\vec{\Gamma}', \vec{\Delta}')$, $(\vec{\Gamma}^{t''}, \vec{\Delta}^{t''}) = (\vec{\Gamma}'', \vec{\Delta}'')$.

A similar recursive definition works for all bags in $\tset\setminus \tset_t$, starting with $(\vec{\Gamma}, \vec{\Delta})$ and defining a suitable connection profile in $p(t)$.

We now apply the equivalence from \Cref{lem:edge:unify} to conclude that the global connectivity definition is satisfied by $(\Gamma^t_j, \Delta^t_j)$ and edge set $Y_{t,j}$ for any bag $t\in V(\tset)$. As a consequence, there exist paths connecting $r$ to $\gamma_i$ in $\bigcup_{t\in V(\tset)} Y_{t,j}$, for $j \in [k_i]$. Since the sets $Y_{t,j}$ are disjoint for $j\in [k_i]$, we obtain the required $k_i$ edge-disjoint paths in the sets $Y_j = \bigcup_{t \in V(\tset)} Y_{t,j}$.
\end{proof}

\subsection{Running Time Analysis} %
\label{sub:ecew:running_time}

To bound the running time of the DP algorithm, we start by proving a bound on the number
of cells in the DP table (\Cref{lem:ecew:dpsize}), and then show how this
implies the running time of the algorithm.

\begin{proof}[Proof of \Cref{lem:ecew:dpsize}] \quad

The following observations are used to prove the lemma:
\begin{observation} \quad
\label[observation]{obs:sizes}

\begin{enumerate}
   \item The number of possible subsets of edges between $w$ elements is $2^{w^2}$.
   \item The number of possible partitions of such a subset of edges is $k^{w^2}$.
   \item The number of possible equivalence relations in a set of $w$ elements is $w^w$.
\end{enumerate}
\end{observation}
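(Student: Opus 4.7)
My plan is to verify the three bounds by separate, direct counting arguments, each of which is a one-liner once the objects being counted are pinned down. For (1), I would note that on a vertex set of size $w$ the number of candidate (undirected) edges is at most $\binom{w}{2} \leq w^2$, and a subset of edges is specified by an independent binary choice for each candidate, yielding at most $2^{\binom{w}{2}} \leq 2^{w^2}$ subsets.

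For (2), the main point is to clarify the notion of ``partition'' in play: throughout \Cref{sec:ec-sndp} the DP decomposes $Y$ as $Y_1 \cup \cdots \cup Y_k$ with the parts indexed by the $k$ path coordinates, so partitions here are labeled (ordered). Given this, a labeled partition of a subset $E' \subseteq E_t$ into $k$ parts is the same as a function $E' \to [k]$ sending each edge to the index of the block containing it, where the empty block is allowed. Since $|E'| \leq \binom{w}{2} \leq w^2$, the number of such functions is at most $k^{|E'|} \leq k^{w^2}$, which gives the claim.

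For (3), I would encode each equivalence relation by its partition into classes and then by a ``representative'' map: fix an arbitrary linear order on the $w$ elements and send each element to the minimum element of its class, represented as an index in $[w]$. This yields an injection from the set of equivalence relations on $w$ elements into $[w]^w$, hence at most $w^w$ equivalence relations; equivalently, one can appeal to the standard Bell-number bound $B_w \leq w^w$, which follows from the same encoding.

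There is no real technical obstacle here; the only point that merits a sentence of justification is the labeled interpretation of ``partition'' in (2), since for unlabeled set partitions the count would instead be a Stirling number. Aligning with the DP's use of indexed path coordinates makes $k^{w^2}$ the natural and correct bound, and the three items then combine cleanly to drive the profile count in the proof of \Cref{lem:ecew:dpsize}.
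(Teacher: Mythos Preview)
Your proposal is correct. The paper states this observation without proof and immediately uses it in the proof of \Cref{lem:ecew:dpsize}, so your argument simply fills in the elementary counting details the paper omits; in particular, your clarification that ``partition'' in (2) means a labeled decomposition into $k$ indexed parts is exactly the reading consistent with the DP's use of $Y = Y_1 \cup \cdots \cup Y_k$.
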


Let $t \in \tset$. We know that $\Psi \subseteq \xset_t \times \xset_t$, whose size can be up to
$2^{kw^2}$. However, it is sufficient to consider equivalence relations
in the bag $t$, as we always take the transitive closure in definitions.
Therefore,  there are at most $w^w$ possibilities for each $\Gamma_j,
\Delta_j$ (\Cref{obs:sizes}) and at most $2^{w^{2wk}}$ possibilities
for $\Psi$.
\end{proof}

Note that the algorithm itself does one of the two possible options for each cell:
it either initializes itself, for which it needs to check every element of $\Psi$,
taking time $O(w^{2wk}kw)$; or it computes the value based on children cells, for which it
enumerates all sets $\Psi', \Psi''$ and checks them for consistency. The
number of such sets to check is again $2^{w^{2wk}}$, and checking each triple
of sets takes time polynomial in $w^{wk}$ and $k^{w^2}$.
In sum, the algorithm takes time $O(n\,\exp(w^{O(wk)}))$.

\section{Details on the Algorithms for \RestrictedGroupSNDP}

In this section, we present \Cref{lem:rgsndp:completeness} and
\Cref{lem:rgsndp:soundness}, which prove the correctness of the reduction
presented in \Cref{sec:rgsndp} and complete the proof of
\Cref{thm:rgsndp:tree_instance}.

\begin{lemma}
\label[lemma]{lem:rgsndp:completeness}
For every solution $F \subseteq E(G)$, there is a valid tree $\widetilde T \subseteq \widetilde \tset$ 
with the same cost and that connects the same groups that $F$ 
$k_i$-connects, that is, if $F$ contains $k_i$ edge-disjoint paths to a vertex
$\gamma_i \in S_i$, then $\widetilde T$ connects the root to some node $\ttil \in \widetilde
S_i$.
\end{lemma}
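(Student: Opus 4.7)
The plan is to apply the completeness direction of the DP (\Cref{lem:ecew:completeness}) to translate the feasible solution $F \subseteq E(G)$ into a valid tree in $\widetilde\tset$. First, \Cref{lem:ecew:completeness} guarantees a family of valid cells $\{(t, \Psi_t)\}_{t \in V(\tset)}$ together with edge sets $Y_t = F \cap E_t$ such that $\Psi_t \cons{Y_t} (\Psi_{t'}, \Psi_{t''})$ at every non-leaf bag $t$ with children $t'$, $t''$, and with $\sum_t c(Y_t) = c(F)$. Each cell then corresponds to a node $\ttil[t, \Psi_t] \in \widetilde\tset$, and each consistency relation to a connecting node $\ttil_c[t, \Psi_t, \Psi_{t'}, \Psi_{t''}, Y_t]$.

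Next, I would construct $\widetilde T$ top-down starting at $\ttil[t_S, \{(\emptyset^{k}, \emptyset^{k})\}]$ (the root of $\widetilde\tset$), descending through the tree decomposition $\tset$. At each bag $t$, whose relevant copy of $\ttil[t, \Psi_t]$ is already fixed by the parent step, I would include the copy of the connecting node $\ttil_c[t, \Psi_t, \Psi_{t'}, \Psi_{t''}, Y_t]$ adjacent to that copy, and then descend into the particular copies of $\ttil[t', \Psi_{t'}]$ and $\ttil[t'', \Psi_{t''}]$ incident to it. This includes exactly one outgoing edge per $\ttil[t, \Psi_t]$ node and the full neighborhood of each connecting node included, matching the definition of a valid tree from \Cref{sec:rgsndp}. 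None of the selected nodes would be removed by the cleanup rules, since each corresponds to a valid cell of the DP; in particular, at any leaf bag $t$, every $(\vec\Gamma, \vec\Delta) \in \Psi_t$ has $\vec\Gamma = (\emptyset, \ldots, \emptyset)$ by validity.

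The cost and group-covering properties follow directly from the construction. Each edge from $\ttil[t, \Psi_t]$ to its chosen connecting node carries cost $c(Y_t)$, while every other edge in $\widetilde\tset$ has cost zero, so summing over bags yields $c(\widetilde T) = \sum_{t \in V(\tset)} c(Y_t) = c(F)$. For the group condition, if $F$ $k_i$-connects $r$ to some $\gamma_i \in S_i$, the explicit construction in the proof of \Cref{lem:ecew:completeness} places a pair $(\vec\Gamma, \vec\Delta)$ in $\Psi_{t_{\gamma_i}}$ with $(r, \gamma_i) \in \Delta_j$ for every $j \in [k_i]$; by the definition of $\widetilde S_i$, every copy of $\ttil[t_{\gamma_i}, \Psi_{t_{\gamma_i}}]$ belongs to $\widetilde S_i$, and the copy chosen by the top-down procedure is connected to $\rootn(\widetilde\tset)$ through $\widetilde T$. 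The main bookkeeping subtlety is verifying that, once a copy of $\ttil[t, \Psi_t]$ is fixed by the previous step, there is a unique copy of each relevant connecting node adjacent to it and unique copies of the children; this holds because the copying procedure in the construction of $\widetilde\tset$ was performed precisely so that every non-root node in $\widetilde\tset$ has exactly one incoming (parent) edge, which disentangles the choices.
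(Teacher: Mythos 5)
Your proposal is correct and follows essentially the same route as the paper: invoke \Cref{lem:ecew:completeness} to obtain valid cells $(t,\Psi_t)$ with $\Psi_t \cons{F\cap E_t}(\Psi_{t'},\Psi_{t''})$, select one copy of each corresponding node and connecting node to form a connected valid tree, and read off the cost and the membership of $\ttil[t_{\gamma_i},\Psi_{t_{\gamma_i}}]$ in $\widetilde S_i$ from the $\Delta$-sets. Your explicit top-down handling of the copies and the pruning rules is a bit more detailed than the paper's one-line remark, but it is the same argument.
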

\begin{proof}
Let $\{\gamma_i \in S_i\}_{i \in [h']}$, $h' \leq h$, be the group vertices that are
$k_i$-connected by $F$  (w.l.o.g. $F$ the first $h'$ groups). %
For every $i \in [h']$, we partition the solution $F$ into $k_i$ subsets that
connect $r$ to $\gamma_i$, i.~e.~we partition $F$ into $\{F_{ij}\}_{j \in
[k_i]}$, such that each $F_{ij}$ contains a $r$-$\gamma_i$-path. We also define
$\Psi_t$ and $(\vec{\Gamma}^i(t), \vec{\Delta}^i(t)) \in \Psi_t$ for all $t
\in V(\tset)$ as is done in the proof of \Cref{lem:ecew:completeness}.

By the proof of \Cref{lem:ecew:completeness}, $(\Psi_{t'}, \Psi_{t''}) \cons{F
\cap E_t} \Psi_t$. Therefore, we can build a tree $\widetilde T$ by picking
one copy of each of the nodes $\ttil[t, \Psi_t]$ and connecting nodes
$\ttil_c[t, \Psi_t, \Psi_{t'}, \Psi_{t''}, F \cap E_t]$, such that $\widetilde
T$ is connected. %

Furthermore, for the topmost bag $t_{\gamma_i}$ containing $\gamma_i$,  we have that %
$\ttil[t_{\gamma_i}, \Psi_{t_{\gamma_i}}] \in \widetilde S_i$, since $F_{ij}$ connects $r$ and $\gamma_i$, which implies that $(r,\gamma_i)
\in \Delta^i_j(t_{\gamma_i})$, for $j \in [k_i]$. Therefore, if $F$
$k_i$-connects a group $S_i$, $\widetilde T$ contains a node of $\widetilde S_i$.

The proof that $F$ and $\widetilde T$ have the same cost follows from the proof
of \Cref{lem:ecew:completeness}.
\end{proof}

\begin{lemma}
\label[lemma]{lem:rgsndp:soundness}
For every valid tree $\widetilde T \subseteq \widetilde \tset$, there is a solution $F
\subseteq E(G)$ with the same cost and which $k_i$-connects a group $S_i$ if 
$\widetilde T$ connects $\widetilde S_i$, that is, %
if $\widetilde T$ connects the root to some node $\ttil \in \widetilde S_i$, then $F$
contains $k_i$ edge-disjoint paths to a vertex $\gamma_i \in S_i$.
\end{lemma}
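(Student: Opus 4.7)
The plan is to read off, from the valid tree $\widetilde T$, a collection of DP cells and chosen edge subsets in the original tree decomposition, and then appeal to the soundness of the DP developed in \Cref{sub:ecew:correctness} (\Cref{lem:ecew:soundness}). The translation is essentially the inverse of the construction used in \Cref{lem:rgsndp:completeness}, combined with the correctness arguments already established for \RootedECSNDP.

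First, I would extract the following data from $\widetilde T$. By validity, for every bag $t \in V(\tset)$ the tree $\widetilde T$ contains exactly one (copy of a) node $\ttil[t, \Psi_t]$ and, for every non-leaf bag $t$ with children $t'$, $t''$ in $\tset$, exactly one full-degree connecting node $\ttil_c[t,\Psi_t,\Psi_{t'},\Psi_{t''},Y_t]$ attached to the three DP nodes $\ttil[t,\Psi_t]$, $\ttil[t',\Psi_{t'}]$, $\ttil[t'',\Psi_{t''}]$. Since $\widetilde\tset$ was pruned of invalid cells and only connecting nodes satisfying $\Psi_t \cons{Y_t} (\Psi_{t'},\Psi_{t''})$ were ever created, the collection $\cset := \{(t,\Psi_t) : t \in V(\tset)\}$ is exactly a family of valid DP cells together with witness edge sets $\{Y_t\}_{t \in V(\tset)}$ satisfying the DP consistency relation. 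This is precisely the setup of \Cref{lem:ecew:soundness}.

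Next, define $F := \bigcup_{t \in V(\tset)} Y_t \subseteq E(G)$. Since every non-zero-cost edge in $\widetilde\tset$ is an edge from some $\ttil[t,\Psi_t]$ to its connecting node $\ttil_c[t,\Psi_t,\Psi_{t'},\Psi_{t''},Y_t]$ labeled with $Y_t$ and carrying cost $c(Y_t)$, and since each such edge appears exactly once in $\widetilde T$, we obtain $c(F) = \sum_{t \in V(\tset)} c(Y_t) = c(\widetilde T)$. It remains to verify the connectivity claim: for each group $S_i$ such that $\widetilde T$ contains a node in $\widetilde S_i$, there are $k_i$ edge-disjoint paths in $F$ from $r$ to some $\gamma_i \in S_i$.

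Fix such an $i$. By the definition of $\widetilde S_i$, the node of $\widetilde S_i$ lying in $\widetilde T$ has the form $\ttil[t_{v}, \Psi_{t_v}]$ for some $v \in S_i$, with a pair $(\vec\Gamma,\vec\Delta) \in \Psi_{t_v}$ satisfying $(r,v) \in \Delta_j$ for every $j \in [k_i]$. Set $\gamma_i := v$. From here, I would apply the soundness argument of \Cref{lem:ecew:soundness} verbatim: starting at the cell $(t_{\gamma_i},\Psi_{t_{\gamma_i}})$ with the distinguished pair $(\vec\Gamma,\vec\Delta)$, propagate consistent pairs $(\vec\Gamma^t,\vec\Delta^t)$ and partitions $Y_t = \bigcup_{j\in[k]} Y_{t,j}$ both down into $\tset_{t_{\gamma_i}}$ (using the children witnesses recorded by the connecting nodes) and up towards $\rootn(\tset)$ (using the parent witnesses), exactly as in that proof. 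Finally, invoke the equivalence of local and global connectivity (\Cref{lem:edge:unify}, extended to high connectivity by coordinate-wise application, as in \Cref{sec:ec-sndp}) to conclude that, for each $j \in [k_i]$, the set $\bigcup_{t \in V(\tset)} Y_{t,j} \subseteq F$ contains an $r$-to-$\gamma_i$ path; the disjointness of the $Y_{t,j}$'s across $j$ yields $k_i$ edge-disjoint paths in $F$ as required.

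The only subtlety, and the one place where I would be careful, is in invoking the propagation step across the tree: the pair $(\vec\Gamma,\vec\Delta)$ witnessing $\gamma_i$ lives in $\Psi_{t_{\gamma_i}}$, and we need the consistency definition $\Psi \cons{Y} (\Psi',\Psi'')$ to guarantee matching pairs in the neighboring cells. This is exactly the purpose of the ``similarly, for any $(\vec\Gamma',\vec\Delta') \in \Psi'$\ldots'' clause in the definition in \Cref{sub:ecew:dp}, which is why the DP keeps whole profiles $\Psi$ rather than single tuples; once that is noted, the induction is identical to the one already carried out for \Cref{lem:ecew:soundness}.
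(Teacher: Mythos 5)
Your proposal is correct and follows the same route as the paper: extract from the valid tree the unique cells $(t,\Psi_t)$ and witness sets $Y_t$ with $\Psi_t \cons{Y_t}(\Psi_{t'},\Psi_{t''})$, identify $\gamma_i$ via the membership condition for $\widetilde S_i$, and then invoke the soundness argument of \Cref{lem:ecew:soundness} together with the local--global equivalence. Your write-up is in fact more explicit than the paper's (which simply states that the extracted conditions "suffice to apply the proof of \Cref{lem:ecew:soundness}"), and your remark about the symmetric clause in the consistency definition correctly pinpoints why the propagation works in both directions.
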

\begin{proof}
Let $\{\widetilde S_i\}_{i \in [h']}$, $h' \leq h$ be the groups connected by
$\widetilde T$ (w.l.o.g.). %
By the definition of $\widetilde S_i$, there must be some $\gamma_i \in S_i$ and
node $\ttil[t_{\gamma_i}, \Psi_{t_{\gamma_i}}] \in \widetilde S_i \cap V(\widetilde T)$ for
every $i \in [h']$.

By the definition of valid tree, $\widetilde T$ contains exactly one node
$\ttil[t, \Psi_t]$ for each $t \in V(\tset)$, as well as exactly one node
$\ttil_c[t, \Psi_t, \Psi_{t'}, \Psi_{t''}, Y_t]$ for each non-leaf node
$t \in V(\tset)$ such that  
\[\Psi_t \cons{Y_t}(\Psi_{t'}, \Psi_{t''}).\]

The above conditions suffice to apply the proof of 
\Cref{lem:ecew:soundness} (with demands $\{(\gamma_i, k_i)\}_{i \in [h']}$);
thus, we can obtain a solution $F$ such that for every $i \in [h']$, $F$
contains $k_i$ edge-disjoint paths from $r$ to $\gamma_i$. Furthermore, $F$ has
the same cost as $\widetilde T$.
\end{proof}

\acksandrefsFull{}

\section{Algorithms for \vcsndp} 
\label{sec:vc-sndp} 

In this section, we focus on vertex connectivity problems,
specifically, \RootedVCSNDP with multiple roots, 
where there is a subset $R \subseteq V$ such that the connectivity requirement $k(u,v) > 0$ only if $u \in R$. 

We argue that any instance of \SubsetkVC can be transformed into that of \RootedVCSNDP with $k$ roots as follows: 
Let $T$ be a terminal set for $k$-subset vertex connectivity. 
Let $R \subseteq T$ be arbitrary subset of terminals with $|R| = k$. 
We specify the connectivity $k(r,v) = k$ for all $r \in R$ and $v \in T$. It is easy to verify that this instance is equivalent to the original instance.  
In the subsequent discussion, we focus on \RootedVCSNDP with at most $k$ roots. 
We start by adding, for each $r \in R$, vertex $r$ into every bag $X_t$. 
This increases the width of the decomposition by an additive factor of at most $k$. 

The organization in this section is as follows. 
We start by extending the framework of \Cref{sec:new-concept} to vertex-connectivity in \Cref{sub:vcnw:extension}.
In \Cref{sub:vcnw:profiles}, we explain how the DP table is setup and the
main ideas for the vertex-connectivity version of the problem. %
In \Cref{sub:vcnw:dp}, we present the DP algorithm, and how to compute the value
of each cell. %
Finally, in \Cref{sub:vcnw:correctness} we show that the algorithm
finds an optimal solution to the \vcsndp{} problem. %
Throughout this section, we will use the notation defined in %
\Cref{sec:ec-sndp}, when applicable.

\section{Global \texorpdfstring{$\Leftrightarrow$}{^^e2^^87^^94} Local Checking for Vertex Connectivity}
\label{sub:vcnw:extension}

When dealing with vertex connectivity, we want to find paths that have
disjoint internal vertices, but which all share the same source and sink. %
Therefore, we must introduce some changes to our connection sets and to 
the way they are defined.

The most important change is the use of a function $\tc^*$, which is a
modified version of transitive closure, introduced in
\Cref{appendix:proofs:unify}. We recall its definition here:
\[
\tc^*_Z(S) = \bigl\{(u,v) \bigm\vert \exists w_1, \ldots, w_\ell \in Z, \forall i \in [\ell-1], (u,w_1), (w_\ell, v), (w_i, w_{i+1}) \in S \big\}
\]

Our previous notion of pair of sets $(\Gamma, \Delta)$ is now extended
with a subset $Z \subseteq X_t$, which represents the vertices that can be
used as internal nodes in a bag. We denote the new triple $(Z, \Gamma^*,
\Delta^*)$. %
Let $W_t \subseteq X_t \setminus X_{p(t)}$ for every $t \in V(\tset)$, $W = \bigcup_{t \in
V(\tset)} W_t$, and a triple $(Z, \Gamma^*_t, \Delta^*_t)$ for every $t \in
V(\tset)$.

Given $W$, we say that the triples $(Z_t,\Gamma^*_t, \Delta^*_t)$ satisfy the \emph{local} (resp. \emph{global})
\emph{connectivity definition} if, for every bag $t \in V(\tset)$,

\noindent
\begin{minipage}[t]{.5\textwidth}
     \vspace{-2em}
     \begin{align*}
          \text{\bf Local}& \\
          Z_t      &:=
          \begin{cases}
               W_t                                                   & \text{if } t = \rootn(\tset) \\
               \left(Z_{p(t)} \cup W_t\right) \cap V_t         ~~~~~~& \text{otherwise}
          \end{cases} \\
          \Gamma_t^* &:= 
          \begin{cases}
               \emptyset                                                        & \text{$t$ is a leaf bag} \\
               \restr{\tc^*_{Z_t}(\Gamma^*_{t'}\cup\Gamma^*_{t''}\cup E_t)}_t   & \text{otherwise} 
          \end{cases} \\
          \Delta_t^* &:= 
          \begin{cases}
               \Gamma^*_t                                                   & \text{if } t = \rootn(\tset) \\
               \restr{\tc^*_{Z_t}(\Delta^*_{p(t)}\cup\Gamma^*_t)}_t   ~~~~~~& \text{otherwise} 
          \end{cases} 
     \end{align*}
     \vspace{.3em}
\end{minipage}%
\begin{minipage}[t]{.5\textwidth}
     \vspace{-2em}
     \begin{align*}
          \text{\bf Global}& \\
               Z_t      &:= W \cap X_t \\
               \Gamma^*_t &:= \restr{\tc^*_W\left(E_{\tset_t} \right)}_t \\
               \Delta^*_t &:= \restr{\tc^*_W\left(E(G)\right)}_t
     \end{align*}
     \vspace{.3em}
\end{minipage}

As in \Cref{sec:new-concept}, we prove that the the local and global
connectivity definitions are equivalent (\Cref{lem:vertex:unify}).
\begin{lemma}
\label[lemma]{lem:vertex:unify}
Let $W_t \subseteq X_t \setminus X_{p(t)}$ be a subset of vertices and $(Z,
\Gamma^*_t, \Delta^*_t)$ a triple for every $t \in V(\tset)$. %
Then, the triples $(Z_t, \Gamma^*_t, \Delta^*_t)$ satisfy the local
connectivity definition iff they satisfy the global one.
\end{lemma}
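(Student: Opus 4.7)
The plan is to obtain \Cref{lem:vertex:unify} as a direct specialization of the already-proved \Cref{lem:mixed:unify} from \Cref{appendix:proofs:unify}. Concretely, I would instantiate the edge choices in \Cref{lem:mixed:unify} by setting $Y_t := E_t$ for every bag $t \in V(\tset)$, so that $Y = \bigcup_{t \in V(\tset)} Y_t = E(G)$. Under this substitution, the local recursion $\restr{\tc^*_{Z_t}(\Gamma^*_{t'} \cup \Gamma^*_{t''} \cup Y_t)}_t$ of \Cref{lem:mixed:unify} becomes exactly $\restr{\tc^*_{Z_t}(\Gamma^*_{t'} \cup \Gamma^*_{t''} \cup E_t)}_t$, while the two global expressions $\restr{\tc^*_W(Y \cap E_{\tset_t})}_t$ and $\restr{\tc^*_W(Y)}_t$ collapse to $\restr{\tc^*_W(E_{\tset_t})}_t$ and $\restr{\tc^*_W(E(G))}_t$, respectively. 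The definitions of $Z_t$ and $\Delta^*_t$ are already written in the same form. Thus the biconditional of \Cref{lem:mixed:unify} specializes term-for-term to the biconditional claimed here, and there is nothing further to prove.

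If instead a self-contained argument is desired, I would mirror the three-step induction used in \Cref{appendix:proofs:unify}: first prove the equivalence of the two definitions of $Z_t$ by induction on the depth of $t$ (using $W_t \subseteq X_t$ and the fact that the bags containing any fixed vertex form a subtree of $\tset$); then prove the equivalence for $\Gamma^*_t$ by induction on the height of $t$; finally prove it for $\Delta^*_t$ by induction on the depth. The base cases (leaves for $\Gamma^*_t$, root for $Z_t$ and $\Delta^*_t$) are immediate. The ``local $\subseteq$ global'' direction in each inductive step is monotonicity of $\tc^*$ combined with the inductive hypothesis and $Z_t \subseteq W$. The reverse inclusion is the interesting direction: given a pair $(u,v)$ realized by a path $p$ with internal vertices in $W$, list the vertices of $p$ lying in $X_t$ in order as $u = w_0, w_1, \ldots, w_\ell = v$, and apply the Path Lemma (\Cref{lemma:monochromatic}) to split $p$ into subpaths between consecutive $w_i, w_{i+1}$, each of which is either a single edge of $E_t$ or lives entirely in one child subtree $\tset_{t'}, \tset_{t''}$. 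By induction every consecutive pair belongs to $\Gamma^*_{t'}$, $\Gamma^*_{t''}$, or $E_t$, and since all intermediate $w_i$ lie in $Z_t$, chaining them through $\tc^*_{Z_t}$ places $(u,v)$ in the local set.

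The main obstacle in the self-contained version is the step for $\Delta^*_t$, where the realizing path $p$ may contain a subpath that exits $\tset_t$ through the parent $p(t)$. One must argue that the endpoints of any such subpath both lie in $X_{p(t)} \cap X_t$, so that the pair can legitimately be attributed to $\Delta^*_{p(t)}$ by the inductive hypothesis. This follows from the standard subtree property of tree decompositions: a vertex appearing in $X_t \setminus X_{p(t)}$ cannot lie in any bag outside $\tset_t$, so the internal vertices of a subpath leaving $\tset_t$ force both of its endpoints into $X_{p(t)}$. Since this is precisely the step already carried out in the proof of \Cref{lem:mixed:unify}, the cleanest write-up is simply the one-line specialization described in the first paragraph.
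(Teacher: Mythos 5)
Your proposal is correct and matches the paper exactly: the paper proves \Cref{lem:vertex:unify} by the same one-line specialization of \Cref{lem:mixed:unify} with $Y_t = E_t$ (hence $Y = E(G)$), under which the local and global definitions collapse term-for-term to those stated here. Your backup self-contained induction also faithfully mirrors the appendix proof of \Cref{lem:mixed:unify}, so nothing further is needed.
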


The proof of the lemma follows from \Cref{lem:mixed:unify}, by setting $Y_t = E_t$.

The notions of local and global connectivity in \Cref{lem:vertex:unify} are
also used to design algorithms for edge-connectivity with node-costs. %
However, we do not need the modified definition of transitive closure
$\tc^*(\cdot)$ and use the $\tc(\cdot)$ operator instead. We omit the full definition to avoid repetition. %
We remark that, even though we can use $\tc^*(\cdot)$, even for edge-connectivity, we can achieve a better runtime using $\tc(\cdot)$, as is shown in \Cref{sec:ec-sndp}.

\subsection{Profiles}
\label{sub:vcnw:profiles}

As before, let $\xset_t$ be the set of all connection profiles for $t$. By
analogy, we say a {\em vertex profile} for $t$ is a $k$-tuple 
$\vec{Z}=(Z_1,Z_2,\ldots, Z_k)$ is such that $Z_i \subseteq X_t$. 
Additionally, let $\vset_t$ denote the set of all vertex
profiles for $t$.

The profile $\Psi$ of node $t$ is now a collection of triples %
$(\vec{Z}, \vec{\Gamma}^*, \vec{\Delta}^*)$, where $\vec{Z}$ is a vertex profile
and $\vec{\Gamma}^*$, $\vec{\Delta}^*$ are connection profiles,
i.e., $\Psi \subseteq \vset_t \times \xset_t \times \xset_t$.

A partial solution $H \subseteq V(G_t)$ is said to be {\em consistent} with
profile $\Psi$ for $t$ if, for all $(\vec{Z}, \vec{\Gamma}^*, \vec{\Delta}^*) \in
\Psi$,
\begin{itemize} 
\item For each $(u,v) \in \Gamma^*_i$ and $(a,b) \in \Gamma^*_j$ for $i\neq j$,
there are paths $P_{uv}, P_{ab} \subseteq H$ connecting the respective vertices in
$G_t$ such that the internal vertices of $P_{uv}$ and $P_{ab}$ are disjoint. %
All the internal vertices in $P_{uv}$, $P_{ab}$ which are in $X_t$ are
contained in $Z_i$, $Z_j$, respectively.

\item There is a global solution $H' \supseteq H$ such that, for each $(u,v)
\in \Delta_i$ and $(a,b) \in \Delta_j$ for $i\neq j$, there are paths $Q_{uv},
Q_{ab} \subseteq H'$ connecting the respective vertices in $G$ such that the
internal vertices of $Q_{uv}$ and $Q_{ab}$ are disjoint.
All the internal vertices in $Q_{uv}$, $Q_{ab}$ which are in $X_t$ are
contained in $Z_i$, $Z_j$, respectively.
\end{itemize}

In other words, a solution consistent with a profile must ``implement'' all
connectivity requirements by $\vec{\Gamma}$ and must be extensible to satisfy
$\vec{\Delta}$. Furthermore, the vertices used by the solution
in $X_t$ are given by $\vec{Z}$.

The DP table has a cell $c[t,\Psi]$ for each bag $t \in V(\tset)$ and each profile $\Psi$ for $t$. 
This cell tentatively stores the optimal cost of a solution consistent with the profile.

\subsection{The DP}  
\label{sub:vcnw:dp}

\shortparagraph{Valid cells:}
We mark all the cells that do not correspond to valid solutions of the problem
as invalid (for example, by setting $c[t, \Psi] = \infty$ for invalid cells). %
In particular, the following cells are invalid: 
\begin{itemize} 
\item Any leaf that has any connectivity requirements is invalid. That is, we set $c[t, \Psi] = 0$ if 
$\Psi \subseteq \vset_t \times\{(\emptyset, \ldots, \emptyset)\} \times \xset_t$; otherwise, $c[t, \Psi] = \infty$.  

\item Any cell that cannot be extended into a feasible solution is invalid. 
Let $t_v$ be the topmost bag that contains a terminal $v \in T$. %
The cell is valid if, for every $r \in R$, there is a triple
$(\vec{Z},\vec{\Gamma}, \vec{\Delta}) \in \Psi$ such that $(r, v) \in
\Delta_j$ for all $j \leq k(r,v)$ and $r,v \in \bigcup_{j \in [k]} Z_i$. %

Otherwise, either $r$ or $v$ are not in the solution, or they are not
connected by $k(r,v)$ vertex-disjoint paths, and hence the cell is invalid.

\item The root bag $\rootn(\tset)$ together with profile $\Psi$ is an invalid
cell if there is a triple $(\vec{Z}, \vec{\Gamma}, \vec{\Delta})\in \Psi$ such
that $\Gamma_j \neq \Delta_j$ for some $j$. 
\end{itemize}

\shortparagraph{DP Computation:} For all other cells, we compute their values from the values of the children. %
Let $t$ be a bag with left-child $t'$ and right-child $t''$. %
Let $\Psi, \Psi', \Psi''$ be their profiles, respectively, and $W \subseteq X_t \setminus X_{p(t)}$.

We say that $\Psi$ is consistent with $(\Psi', \Psi'')$ via $W$ (abbreviated
by $\Psi \cons{W} (\Psi', \Psi'')$) if the following
conditions are satisfied.
For each triple $(\vec{Z}, \vec{\Gamma}, \vec{\Delta}) \in \Psi$, there are
$(\vec{Z}', \vec{\Gamma}' , \vec{\Delta}') \in \Psi'$ and \mbox{$(\vec{Z}'',
\vec{\Gamma}'' , \vec{\Delta}'') \in \Psi''$}, together with a partition of $W$
into $W_1 \cup W_2 \cup \ldots \cup W_k$ such that, for all $j \in [k]$,
\begin{itemize}
\item $\Gamma_j = \restr{\tc^*_{Z_t}(\Gamma'_j \cup \Gamma''_j \cup E_t)}_t$
\item $\Delta'_j = \restr{\tc^*_{Z_{t'}}(\Delta_j \cup \Gamma'_j)}_{t'}$
\item $\Delta''_j = \restr{\tc^*_{Z_{t''}}(\Delta_j \cup \Gamma''_j)}_{t''}$
\item $Z_j \cap (X_t \setminus X_{p(t)}) = W_j$
\item $Z'_j \cap X_t = Z_j \cap X_{t'}$
\item $Z''_j \cap X_t = Z_j \cap X_{t''}$
\end{itemize}

Similarly, for any triple $(\vec{Z}', \vec{\Gamma}', \vec{\Delta}') \in \Psi'$ %
(resp., $(\vec{Z}'', \vec{\Gamma}'', \vec{\Delta}'') \in \Psi''$), %
there are $(\vec{Z}'', \vec{\Gamma}'', \vec{\Delta}'') \in \Psi''$ %
(resp., $(\vec{Z}', \vec{\Gamma}', \vec{\Delta}') \in \Psi'$), %
plus $(\vec{Z}, \vec{\Gamma}, \vec{\Delta}) \in \Psi$ and some partition of $W$  satisfying similar conditions as above.

We remark that the condition $Z_t = (Z_{p(t)} \cup W_t) \cap X_t$ in the local
connectivity definition of \Cref{lem:vertex:unify}, can be equivalently
written as $Z_t \cap X_{p(t)} = Z_{p(t)} \cap X_t$ and $Z_t \cap (X_t
\setminus X_{p(t)}) = W_t$ (since $W_t$ and $Z_{p(t)}$ are always disjoint). %
We choose to use the second formulation for convenience.

Then the value of $c[t, \Psi]$ can be defined recursively among valid cells: 
\[c[t, \Psi] = \min_{\Psi \cons{W} (\Psi', \Psi'')} \left( c[t', \Psi'] + c[t'', \Psi''] + c(W) \right) \]

\shortparagraph{Solution:}
The solution can be computed by 
\[
\min \left\{ c[\rootn(\tset),\Psi] \mid (\forall (\vec{Z}, \vec{\Gamma}, \vec{\Delta}) \in \Psi)(\forall j \in [k]) \Gamma_j = \Delta_j \right\}
\] 

\subsection{Correctness of the DP}
\label{sub:vcnw:correctness}
The following two lemmas imply correctness. 

\begin{lemma} 
\label[lemma]{lem:vcnw:completeness}
Let $H \subseteq V(G)$ be a feasible solution. %
Then, for every bag $t$, there is a profile $\Psi_t$ for $t$ such that %
$(t, \Psi_t)$ is valid; for any bag $t$ with children $t'$, $t''$, %
then $\Psi \cons{W_t} (\Psi', \Psi'')$, where $W_t = H \cap (X_t \setminus X_{p(t)})$. %
Furthermore, $c[\rootn(\tset), \Psi_{\rootn(\tset)}] = c(H)$.
\end{lemma}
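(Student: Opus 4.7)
My plan is to mirror the proof of \Cref{lem:ecew:completeness}, augmented with the vertex-set bookkeeping required by the triples $(\vec Z, \vec\Gamma^*, \vec\Delta^*)$. Given a feasible solution $H \subseteq V(G)$, I will start by fixing, for each demand $i \in [h]$ with requirement $k_i$ between $r_i$ and $v_i$, a collection of $k_i$ internally vertex-disjoint $r_i$-$v_i$-paths $P^i_1, \ldots, P^i_{k_i}$ inside $G[H]$; such paths exist by feasibility. I will then produce, for each demand $i$, a partition $H = H^i_1 \cup \cdots \cup H^i_k$ by putting every internal vertex of $P^i_j$ into $H^i_j$ for $j \leq k_i$, assigning $r_i$ and $v_i$ to slot~$1$, and dumping every remaining vertex of $H$ into $H^i_1$.

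For each bag $t$ and each demand $i$, I will then define the triple
\[
Z^i_j(t) := H^i_j \cap X_t, \qquad
\Gamma^{*,i}_j(t) := \restr{\tc^*_{H^i_j}(E(\tset_t))}_t, \qquad
\Delta^{*,i}_j(t) := \restr{\tc^*_{H^i_j}(E(G))}_t,
\]
following the global connectivity definition of \Cref{sub:vcnw:extension}, and set $\Psi_t$ to be the collection of these $h$ triples. Validity of $(t,\Psi_t)$ will be checked directly against the three invalid-cell rules: leaf bags have $E(\tset_t) = \emptyset$ so $\Gamma^{*,i}_j = \emptyset$; at $\rootn(\tset)$ we have $E(\tset_t) = E(G)$ and hence $\Gamma^{*,i}_j = \Delta^{*,i}_j$; at the topmost bag $t_{v_i}$ containing a terminal $v_i$, the path $P^i_j \subseteq H^i_j$ certifies $(r_i, v_i) \in \Delta^{*,i}_j(t_{v_i})$ for every $j \in [k_i]$, and $r_i, v_i \in H$ ensures $r_i, v_i \in \bigcup_j Z^i_j(t_{v_i})$.

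For the consistency step $\Psi_t \cons{W_t} (\Psi_{t'}, \Psi_{t''})$ with $W_t = H \cap (X_t \setminus X_{p(t)})$, I will match triples across the three profiles by demand index and, for each demand $i$, partition $W_t$ via $W^i_{t,j} := H^i_j \cap (X_t \setminus X_{p(t)})$. The three transitive-closure equations required by $\cons{W_t}$ will then follow from \Cref{lem:vertex:unify} applied once per pair $(i,j)$ (with $W$ in the lemma instantiated by the disjoint union $\bigcup_t W^i_{t,j} = H^i_j$), while the three $Z$-equalities reduce to simple set manipulations using $Z^i_j(t) = H^i_j \cap X_t$. The cost equality $c[\rootn(\tset),\Psi_{\rootn(\tset)}] = c(H)$ is then immediate because each vertex of $H$ contributes to exactly one $W_t$, namely the topmost bag containing it.

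The step I expect to be the most delicate is the treatment of vertices of $H$ that lie on none of the paths $P^i_j$ for some demand $i$: they still appear in $W_t$ and must be absorbed into the per-demand partition. My plan is to place all such vertices in slot~$1$, which only enlarges $H^i_1$ and hence $\Gamma^{*,i}_1$ and $\Delta^{*,i}_1$; this is harmless for validity since the required pair $(r_i, v_i)$ continues to be witnessed in every slot $j \leq k_i$ through the fixed path $P^i_j$, and the extra pairs simply add slack that the validity rules tolerate. The symmetric half of the consistency check, starting from a triple in $\Psi_{t'}$ or $\Psi_{t''}$ and producing matching triples in the other two profiles, is handled identically because our construction creates exactly one triple per demand in each profile, yielding a canonical bijection.
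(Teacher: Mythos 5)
Your proposal is correct and follows essentially the same route as the paper's proof: define the triples $(\vec Z, \vec\Gamma^*, \vec\Delta^*)$ per demand via the global connectivity definition, verify the three validity rules directly, and invoke \Cref{lem:vertex:unify} with the partition $W_j(t) = H_j \cap (X_t \setminus X_{p(t)})$ to obtain consistency and the cost equality. Your explicit handling of the leftover vertices (absorbing them into slot~1) just makes concrete a partition the paper leaves implicit, and your reasoning that this slack is harmless is sound.
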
 
\begin{proof}

We fix $r \in R$, $v \in T$ with $k(r,v) > 0$. Observe that $r, v \in H$, and
the solution $H$ can be partitioned into $H = \{H_j\}_{j \in [k]}$ such that
for $j \leq k(r,v)$, there is a path $\pset_j$ between $r$ and $v$ whose
internal vertices are in $H_j$. %
Note that, if $k(r,v) < k$, then
$H_j$ is irrelevant, for $j > k(r,v)$, and might be empty.

For any bag $t$, we define $\Psi_t$ as follows. %
The triple $(\vec{Z}(t), \vec{\Gamma}^{*}(t), \vec{\Delta}^{*}(t))$ is defined as: %
\begin{align*}
   Z_j(t)          &:= H_j \cap X_t \\
   {\Gamma^*}_j(t) &:= \restr{\tc^*_{H_j}\left(E_{\tset_t} \right)}_t \\
   {\Delta^*}_j(t) &:= \restr{\tc^*_{H_j}\left(E(G)\right)}_t
\end{align*}

We add to $\Psi_t$ the triples $(\vec{Z}(t), \vec{\Gamma}^{*}(t), \vec{\Gamma}^{*}(t))$ for all $r \in R$, $v \in T$.

We first show that any cell $(t, \Psi_t)$ is valid. %
We again fix $r \in R$, $v \in T$ and take the corresponding triples $(\vec{Z}(t), \vec{\Gamma}^{*}(t),
\vec{\Delta}^{*}(t)) \in \Psi_t$ and partition $\{H_j\}_{j \in [k]}.$ %

For every leaf bag $t$, $E_t = \emptyset$ and hence ${\Gamma^*}_j(t) = \emptyset$. %
For the root bag, $t=\rootn(\tset)$, $E(\tset_t) = E$ and hence ${\Gamma^*}_j(t) = {\Delta^*}_j(t)$. %
Finally, for $j \in [k(r,v)$, $\pset_j$ connects $r$ to $\gamma_i$ using
internal vertices in $H_j$, so $(r, v) \in {\Delta^*}_j(t_{v})$. %
Since, $r, v \in \bigcup_{j \in [k]} Z_j(t_{v})$, we conclude that
the cells $(t, \Psi_t)$ defined above are marked valid.

For each bag $t \in V(\tset)$, we additionally define $W_j(t) = H_j \cap (X_t
\setminus X_{p(t)})$. Observe that the $W_j(t), j = 1,2,\ldots,k_i$ naturally induce a partition on $W_t$. It is clear from the definitions that $Z_j(t)$,
${\Gamma^*}_j(t)$, ${\Delta^*}_j(t)$ satisfy the global connectivity
definitions for vertex sets $W_j(t)$. Therefore, we can apply \Cref{lem:vertex:unify}, 
for every $j \in [k]$ to conclude that
\[
\Psi_t \cons{W_t} (\Psi_{t'}, \Psi_{t''})
\]
for every bag $t \in V(\tset)$ with children $t'$, $t''$, where $W_t = H \cap (X_t \setminus X_{p(t)})$.

Notice that the $c[\rootn(\tset), \Psi_{\rootn(\tset)}] = \sum_{t \in V(\tset)} c(W_t) = c(H)$.
\end{proof} 

We remark that every DP solution uses exactly one cell per bag $t \in V(\tset)$.

\begin{lemma}
\label[lemma]{lem:vcnw:soundness}
Let $\{(t, \Psi_t)\}_{t\in V(\tset)}$ be the cells selected in a dynamic programming solution. %
There exists a solution $H \subseteq V$ with cost $c(H) = c[\rootn(\tset), \Psi_{\rootn(\tset)}]$ %
such that for every demand $(\gamma_i, k_i)$, $i \in [h]$,
$H$ contains $k_i$ vertex-disjoint paths from $r$ to $\gamma_i$.
\end{lemma}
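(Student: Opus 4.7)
The plan is to mirror the proof of \Cref{lem:ecew:soundness}, adapting it to the vertex-connectivity and vertex-cost setting with the aid of \Cref{lem:vertex:unify}. First I would construct the solution $H$ as $H := \bigcup_{t \in V(\tset)} W_t$, where each $W_t \subseteq X_t \setminus X_{p(t)}$ is a vertex subset witnessing the consistency relation $\Psi_t \cons{W_t} (\Psi_{t'}, \Psi_{t''})$ used by the DP when it computed $c[t, \Psi_t]$ from its children. By construction $c(H) = \sum_{t \in V(\tset)} c(W_t) = c[\rootn(\tset), \Psi_{\rootn(\tset)}]$, which immediately yields the cost statement, so the remaining work is purely about extracting the disjoint paths.

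Next I would fix a single demand, say $r \in R$ and $v \in T$ with $k(r,v) = k_i > 0$, and target $k_i$ internally vertex-disjoint $r$-$v$-paths inside $H$. Since the cell $(t_v, \Psi_{t_v})$ is valid, there is a triple $(\vec{Z}(t_v), \vec{\Gamma}^*(t_v), \vec{\Delta}^*(t_v)) \in \Psi_{t_v}$ with $(r, v) \in \Delta^*_j(t_v)$ for every $j \in [k_i]$ and with $r, v \in \bigcup_{j \in [k]} Z_j(t_v)$. Starting from this triple, I would recursively pick at each neighboring bag a compatible triple in $\Psi_{t'}, \Psi_{p(t)},$ etc., together with the partition $W_t = W_{t,1} \cup \cdots \cup W_{t,k}$ supplied by the $\cons{W_t}$ relation. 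Walking both upwards and downwards from $t_v$ in $\tset$ produces, for every bag $t$, a triple $(\vec{Z}(t), \vec{\Gamma}^*(t), \vec{\Delta}^*(t))$ and a partition of $W_t$ that, coordinate by coordinate for each $j \in [k]$, satisfies the local connectivity definition of \Cref{lem:vertex:unify}.

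Having verified the local conditions, I would apply \Cref{lem:vertex:unify} independently to each coordinate $j$ with the vertex set $W^{(j)} := \bigcup_{t \in V(\tset)} W_{t,j}$. The lemma transfers the local triples $(Z_j(t), \Gamma^*_j(t), \Delta^*_j(t))$ into the corresponding global ones for $W^{(j)}$, so in particular $(r,v) \in \Delta^*_j(t_v) = \restr{\tc^*_{W^{(j)}}(E(G))}_{t_v}$. By the definition of $\tc^*_{W^{(j)}}$, this certifies an $r$-$v$-path whose internal vertices all lie in $W^{(j)}$. Since the sets $\{W^{(j)}\}_{j \in [k_i]}$ come from a partition of each $W_t$, they are pairwise disjoint, and consequently the $k_i$ paths obtained this way are internally vertex-disjoint; all of them are contained in $H$, giving the required connectivity between $r$ and $v$.

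The main subtlety I expect is the bookkeeping of the coordinate-wise selection of triples across all bags: the relation $\Psi_t \cons{W_t} (\Psi_{t'}, \Psi_{t''})$ only promises matching triples in immediate neighbors, and one needs a single coherent choice of triple and of $W_t$-partition at every bag so that the coordinate labels $1,\dots,k$ propagate consistently across the whole tree. This is handled by the symmetry of the consistency definition, which explicitly gives matching triples for each of $\Psi$, $\Psi'$, and $\Psi''$, allowing the recursion rooted at $t_v$ to be extended to the entire $\tset$ without conflict. Once that recursion is in place, \Cref{lem:vertex:unify} does the real work of turning the local bag-by-bag information into genuine global vertex-disjoint paths, and iterating over all demands yields the lemma.
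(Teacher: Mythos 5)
Your proposal follows essentially the same route as the paper's proof: define $H$ as the union of the witnessing sets $W_t$, recursively propagate a coherent choice of triples and $W_t$-partitions from $t_v$ through the whole tree (the paper defers this to the proof of \Cref{lem:ecew:soundness}), and then apply \Cref{lem:vertex:unify} coordinate-wise to extract internally disjoint paths from the disjoint sets $W^{(j)}$. The only detail worth adding is to state explicitly that the validity condition $r,v \in \bigcup_j Z_j(t_v)$ guarantees $r,v \in H$, so the certified paths actually live in the induced solution.
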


\begin{proof}
Let us start by defining the solution $H$. For every non-leaf $t \in V(\tset)$ with children $t'$, $t''$,
there must be some set $W_t$ such that
\[
\Psi_t \cons{W_t} (\Psi_{t'}, \Psi_{t''})
\]
We take $H = \bigcup_{t \in V(\tset)} W_t$ as our solution.

Now, fix $r \in R$, $v \in V$. %
We will prove that $H$ contains $k(r,v)$ vertex-disjoint paths from $r$ to $v$. %
We now define, for each bag $t\in V(\tset)$, the sets $(\vec{Z}(t), \vec{\Gamma}^*(t), \vec{\Delta}^*(t)) \in
\Psi_t$, as well as partition $\{W_j(t)\}_{j \in [k]}$ of the vertices in
$W_t$, as is done in the proof of \Cref{lem:ecew:soundness}.

Now, since for every $j \in [k]$, $Z_j(t)$, $\Gamma^*_j(t)$, $\Delta^*_j(t)$
satisfy the local connectivity constraints, we can apply \Cref{lem:vertex:unify}, which implies that $Z_j(t)$, $\Gamma^*_j(t)$,
$\Delta^*_j(t)$ satisfy the global connectivity constraints as well. %

Since all cells in $(t, \Psi_t)$ are valid, we know that $(r, v) \in
\Delta^*_j(t)$ for $j \in [k(r,v)]$ and $r, v \in \bigcup_{j \in [k]}
Z_j(t)$. %
Therefore, $H_j = \bigcup_{t \in V(\tset)} H_j(t)$ contains the internal nodes of
a path between $r$ and $v$, and $r, v \in H$. We conclude that
$H$ contains $k(r,v)$ vertex-disjoint paths between $r$ and $v$.
\end{proof} 

\subsection{Running Time Analysis} %
\label{sub:vcnw:running_time}

The running time for the presented algorithm follows closely that for the
edge-connectivity version, with the exception that $\Gamma^*$ and $\Delta^*$
are, in general, no longer equivalence relations, and there are now $2^{(w+k)^2}$
possibilities for each such set.

We conclude that the running time of the algorithm is $O(n\cdot \exp(\exp(O((w+k)^2k))))$.
We also remark that a very similar idea for the DP works for the edge-connectivity version of the problem with node-costs. However, we can use the definition of $\tc$ instead of $\tc^*$ in order to define the DP cells and hence the runtime for that case is $O(n\cdot \exp(\poly(w^{wk}))$.

\section{Hardness of Restricted Group SNDP on General Graphs}
\label{app:hardness}

In this section, we sketch the proof of the approximation hardness
for the vertex-weighted \RestrictedGroupSNDP on general graphs.
The reduction is from {\em Min-$k$-CSP}.

In {\em Max-$k$-CSP}, we are given a set of $n$ variables $x_1,\ldots,x_n$
over the domain $[N]$ and a set of $m$ constraints (which are functions) $C_1,\ldots,C_{h}$
such that each $C_j$ depends on (exactly) $k$ variables.
The goal in the {\sf Max-$k$-CSP} is to find an assignment to variables 
that maximizes the number of constraints satisfied.
In the minimization version, say {\em Min-$k$-CSP}, 
we are allowed to assign multiple values (or {\em labels}) to each variable 
to guarantee that each constraint can be satisfied 
by some of the assignments while minimizing the total number of labels.
To be specific the assignment is a function $\sigma:\{x_1,\ldots,x_n\}\rightarrow2^{[N]}$, and we wish to find an assignment such that,
for every constraint $C_j$ depending on variables $x_{i_1},\ldots,x_{i_k}$, 
there exist labels $a_1\in \sigma({x_{i_1}}), \ldots, a_k\in \sigma({x_{i_k}})$
such that $C_j(a_1,\ldots,a_n)$ evaluates to {\em true} for all $j\in[m]$.
It is know that if the hardness of {\sf Max-$k$-CSP} is $\alpha(n)$, then the hardness of
the {\sf Min-$k$-CSP} is $\Omega(\alpha(n)^{1/k})$ (see, e.g., \cite{Laekhanukit14}).
The special case of $k=2$ is called the {\em Label-Cover} problem.

We reduce from {\sf Min-$k$-CSP} to \RestrictedGroupSNDP by representing
the assignments of variables by positive weight vertices 
and representing accepting configurations of each constraints by a group of vertices.
To be precise, we first construct a graph $G=(V,E)$ such that
\begin{align*}
V &= \{r\} \cup \{(x_i,a): i\in[n], a\in [N]\} \cup \\
  &~~~~ \{(j,a_1,\ldots,a_k): j\in [h], a_1,\ldots,a_k\in[N] \mbox{ and }
         C_j(a_1,\ldots,a_k)=\mathrm{true}\}\\
E &= \{r(x_i,a): i\in[n], a\in [N]\} \cup \\
  &~~~~ \{(x_i,a_i)(j,a_1,\ldots,a_k): (x_i,a_i),(j,a_1,\ldots,a_k)\in V \mbox{ and } \\
  &~~~~~~ \mbox{$C_j$ depends on $x_i$ at the $\ell$-th argument, and $a_\ell=a$}\}\\
\end{align*}
We set the cost of each vertex of the form $(x_i,a)$ to be one
and set cost of other vertices to be zero.
We define $r$ to be the root vertex and, for each $j\in [h]$,
we define a group $S_j=\{(j,a_1,\ldots,a_k)\in V\}$.
We set the connectivity demands $k_j=k$ for all $j\in[h]$.
This finishes the construction.

Observe that every vertex that belongs to some group $S_j$ has degree exactly $k$.
Thus, the only way to have $k$-edge disjoint paths from some vertex in $S_j$ to
the root $r$ is to choose all of its neighbors in 
$\{(x_i,a): \mbox{$C_j$ depends on $x_i$}, a\in [N]\}$.
It is not hard to see that vertices in the latter set corresponding to 
an assignment to variables in the Min-$k$-CSP instance
and that some vertex in $S_j$ has $k$ neighbors if and only if 
the chosen vertices correspond to an assignment that satisfies the constraint $C_j$.
Moreover, ones may verify that a vertex $v\in S_j$ has $k$ neighbors in an induced subgraph $H\subseteq G$
if and only if $v$ has $k$ edge-disjoint paths connecting to the root $r$.
Consequently, this gives a reduction from {\sf Min-$k$-CSP} to 
the vertex-weighted \RestrictedGroupSNDP.
The approximation hardness then follows from the hardness of {\sf Min-$k$-CSP},
which is $2^{\log^{1-\epsilon}n}$ under $\NP\not\subseteq\DTIME(n^{\polylog{n}})$,
and $n^{\delta/k}$ for some constant $0<\delta<1$ under the Sliding Scale Conjecture \cite{BellareGLR93}.
\end{document}